\tikzset{surface1/.style={draw=blue!70!black, fill=blue!40!white, fill opacity=.6}}
\tikzset{surface2/.style={draw=red!70!black, fill=red!40!white, fill opacity=.6}}
\tikzset{surface3/.style={draw=green!70!black, fill=green!40!white, fill opacity=.6}}
  \pgfplotsset{
        compat=1.8}
\newtheorem{theorem}{Theorem}[section]
\newtheorem{lemma}{Lemma}[section]
\theoremstyle{definition}
\newtheorem*{definition}{Definition}
\def \be {\begin{equation}}
\def \ee {\end{equation}}
\def \bea {\begin{eqnarray}}
\def \eea {\end{eqnarray}}
\title{On the construction of asymptotically flat initial data in scalar-tensor effective field theory}
\author{{\'A}ron D. Kov{\'a}cs\\{\small Department of Applied Mathematics and Theoretical Physics, University of Cambridge}\\ {\small Wilberforce Road, Cambridge CB3 0WA, United Kingdom} \\ adk42@cam.ac.uk}
\begin{document}
\maketitle

\begin{abstract}
We study the constraint equations for a class of scalar-tensor effective field theories of gravity, including the operators up to $4$ derivatives in the action ($4\partial$ST). We extend the conformal transverse traceless and conformal thin sandwich methods of General Relativity to rewrite the constraint equations of the scalar-tensor theory as a set of elliptic partial differential equations. It is shown that, at weak coupling, a unique solution exists to the corresponding elliptic boundary value problems on asymptotically Euclidean initial slices under similar conditions as in the case of General Relativity. Furthermore, we find a generalization of the Bowen-York solution for $4\partial$ST theories, too. These results demonstrate that standard methods for constructing initial data in General Relativity are applicable (with minimal modification) to weakly coupled $4\partial$ST theories.
\end{abstract}

\section{Introduction}

The direct observation of gravitational waves (GW) produced by compact binary systems \cite{Abbott:2016blz} has inspired a tremendous amount of work on gravitational physics. In particular, many are excited by the possibility to use GW observations to test General Relativity (GR) or (one day perhaps) to find convincing evidence for new physics beyond GR. In order to do this, however, we need to know how the gravitational wave signals predicted by alternative theories differ from those of GR. 

To test General Relativity, one may turn to the Effective Field Theory (EFT) philosophy of model-building\cite{Burgess:2003jk,Endlich:2017tqa}. According to the bottom-up EFT approach, we can parameterize our ignorance about fundamental physics by enumerating all the operators in the effective action with the desired symmetries and field content. In the effective action, operators containing higher derivatives are suppressed by powers of a strong coupling energy scale. To study low energy processes (i.e. at energies well below the strong coupling scale) it is sufficient to keep only the first few terms in the effective action. Therefore, the classical equations of motion of the weakly coupled, truncated effective theory are expected to describe low energy physics accurately. Although the corrections to the equations of motion arising from the higher derivative terms are small at weak coupling, these corrections may be important in certain situations. Small effects may accumulate over time, producing large observable deviations from the leading order theory \cite{Flanagan:1996gw}.

To make predictions using an EFT of gravity, the theory must satisfy some minimal mathematical requirements. For example, the theory must possess a well-posed initial value formulation. This means that given a suitable class of initial data, the equations of motion of the theory must be able to predict the future evolution of the data. More precisely, there must exist a unique solution to the (gauge-fixed) equations of motion with the specified initial data so that the solution depends continuously on the data (in a suitable norm). This is not only an important mathematical consistency condition but also necessary to perform numerical simulations.

Consider first the EFT of vacuum gravity. The leading order correction to the ($2$-derivative) Einstein-Hilbert action contains six derivatives in $d=4$ spacetime dimensions (after field redefinitions) \cite{Endlich:2017tqa}. One difficulty with this EFT is that it has higher than second order equation of motion. It is currently an open question how to obtain a well-posed initial value formulation for such a theory, hindering the possibility to explore its dynamics. (See, however, \cite{Cayuso:2020lca} for some potential remedies of this issue in a spherically symmetric setting.)

It appears to be easier to make progress when gravity is coupled to a real scalar field. In this case, the leading order term (containing $2$ derivatives) in the action is the Einstein-Hilbert term with the standard scalar kinetic term. The leading order corrections that respect parity symmetry appear at $4$ derivatives after field redefinitions \cite{Weinberg:2008hq}. Including these $4$-derivative corrections, the effective action can be brought to the form 
\be
\label{4dST}
 S = \frac{1}{16\pi G} \int d^4 x \sqrt{-g} \left( - V(\phi)+ R + X +  \epsilon_1 f_1(\phi) X^2 +  \epsilon_2 f_2(\phi) {\cal L}_{\rm GB}  \right)
\ee
where $X \equiv -(1/2) g^{\mu\nu} \partial_\mu \phi \partial_\nu \phi$ is the kinetic term of the minimally coupled scalar, $V(\phi)$, $f_1(\phi)$ and $f_2(\phi)$ are arbitrary smooth functions, $\epsilon_1$ and $\epsilon_2$ are dimensionless constant parameters introduced for later convenience and
\be
\label{LGB}
 {\cal L}_{GB}=\frac14 \delta^{\mu_1\mu_2\mu_3\mu_4}_{\nu_1\nu_2\nu_3\nu_4}R_{\mu_1\mu_2}{}^{\nu_1\nu_2}R_{\mu_3\mu_4}{}^{\nu_3\nu_4}.
\ee
is the Gauss-Bonnet invariant. Here the generalised Kronecker delta is given by
\be
\delta^{\mu_1...\mu_n}_{\nu_1...\nu_n}\equiv n! \delta^{\mu_1}_{[\nu_1}\ldots \delta^{\mu_n}_{\nu_n]}.
\ee
Henceforth we shall refer to the theory \eqref{4dST} as $4$-derivative scalar-tensor or $4\partial$ST theory. The theory has second order equations of motion
\bea
E^{\mu}{}_\nu &= & G^{\mu}{}_\nu-\frac12 \left(X-V(\phi)+\epsilon_1 f_1(\phi)X^2\right)\delta^\mu_\nu-\frac12(1+2\epsilon_1 f_1(\phi)X)\nabla^\mu\phi\nabla_\nu\phi \nonumber \\
& &+\left(\epsilon_2 f_2''(\phi)\nabla_{\nu_1}\phi\nabla^{\mu_1}\phi+\epsilon_2 f_2'(\phi)\nabla_{\nu_1}\nabla^{\mu_1}\phi\right)\delta_{\nu\nu_1\nu_2\nu_3}^{\mu\mu_1\mu_2\mu_3}R_{\mu_2\mu_3}{}^{\nu_2\nu_3} \label{4dst_grav_eom}
\eea
\bea
E_\phi &=&-(1+6\epsilon_1 f_1(\phi)X)\Box_g\phi+V'(\phi)-2\epsilon_1 f_1(\phi)\delta_{\nu_1\nu_2}^{\mu_1\mu_2}\nabla_{\mu_1}\phi\nabla^{\nu_1}\phi\nabla_{\mu_2}\nabla^{\nu_2}\phi \nonumber \\
& &+3\epsilon_1 f_1^\prime(\phi)X^2-\frac14 \epsilon_2 f_2'(\phi) \delta_{\nu_1\nu_2\nu_3\nu_4}^{\mu_1\mu_2\mu_3\mu_4} R_{\mu_1\mu_2}{}^{\nu_1\nu_2} R_{\mu_3\mu_4}{}^{\nu_3\nu_4} \label{4dst_scalar_eom}
\eea
i.e. it is of the Horndeski class \cite{Horndeski1974,Kobayashi:2011nu}. It was recently shown \cite{Kovacs:2020pns,Kovacs:2020ywu} that weakly coupled $4\partial$ST theories (or more generally, weakly coupled Horndeski \cite{Horndeski1974} theories) admit a locally well-posed initial value formulation, hence satisfying the important mathematical requirement of predictability mentioned above.

The scalar-tensor theory \eqref{4dST} is also interesting from a physics point of view because it admits black hole solutions with scalar hair for several choices of the coupling function $f_2(\phi)$ (see e.g. \cite{Kanti:1995vq,Kleihaus:2011tg,Kleihaus:2015aje,Sotiriou:2014pfa,Delgado:2020rev,Doneva:2017bvd,Cunha:2019dwb}). This means that there are black hole solutions of some of these theories that differ from the black hole solutions of GR which may have important observational implications. Hence, there is plenty of motivation to study the $4\partial$ST theories in more detail.

To explore the dynamics of $4\partial$ST theories, it is important to consider another requirement: the possibility to construct initial data that gives a good approximation of a realistic astrophysical system when evolved in time. Similarly to the initial value problem, this is a mathematical consistency problem that is also relevant for numerical simulations. In General Relativity, the initial data is subject to constraint equations obtained as follows. Let $\Sigma$ be a spacelike hypersurface in a globally hyperbolic spacetime with (future-directed) unit normal $n^\mu$ and let $\gamma$ be the induced metric on $\Sigma$. Then the projections
\be
{\cal H}^{\rm GR}\equiv G_{\mu\nu}n^\mu n^\nu=0 \qquad \qquad {\rm and} \qquad \qquad {\cal M}^{\rm GR}_\mu\equiv G_{\alpha\beta}n^\alpha \gamma^\beta_\mu=0
\ee
of Einstein's equations, called the Hamiltonian and momentum constraints, contain no second time derivatives of the metric. Similarly, in a $4\partial$ST theory, the projections
\be\label{constraint_proj}
{\cal H}\equiv E_{\mu\nu}n^\mu n^\nu=0 \qquad \qquad {\rm and} \qquad \qquad {\cal M}_\mu\equiv E_{\alpha\beta}n^\alpha \gamma^\beta_\mu=0
\ee
of the $4\partial$ST equation of motion \eqref{4dst_grav_eom} are constraint equations: they contain no second time derivatives of the metric and the scalar field.

It should be noted that it is possible to perform numerical simulations in a $4\partial$ST theory without a detailed knowledge of the properties of the $4\partial$ST constraint equations. The dynamics of binary black hole systems in a shift symmetric $4\partial$ST theory was studied numerically in a recent work of East and Ripley \cite{East:2020hgw}. They focused on the theory with $V(\phi)=f_1(\phi)=0$ and $f_2(\phi)=\lambda \phi$ ($\lambda$ is a constant) which admits black hole solutions with scalar hair. The initial data used in \cite{East:2020hgw} is based on the observation that the $4\partial$ST constraint equations \eqref{constraint_proj} reduce to the Hamiltonian and momentum constraints of vacuum General Relativity provided that one chooses the initial values of the scalar field and its time derivative to be zero on the entire initial data surface. Hence one can use a general relativistic initial data for a binary black hole system (with a trivial initial scalar field configuration) and evolve this data in time. With the choice of couplings mentioned above, the black holes scalarize dynamically during the evolution. Nevertheless, one may be interested in a more general class of solutions to the constraint equations. For example, one may wish to construct initial data that is closer to a binary system of scalarized black holes.

In General Relativity there is a vast literature on the initial data problem (see e.g. \cite{Choquet-Bruhat:2009xil,Cook:2000vr,num_rel,Tichy:2016vmv} and references therein). Most of the successful approaches, such as the conformal transverse traceless (CTT) \cite{York:1973ia} or the conformal thin sandwich (CTS) \cite{York:1998hy} method, are based on a conformal rescaling of certain variables. The main idea is that the constraint equations form a system of elliptic partial differential equations (PDEs) in some of these conformally rescaled variables, whereas other variables play the role of freely specifiable sources ("free data"). Once a useful set of variables is identified, the construction of initial data comes down to the following two steps. Firstly, it must be demonstrated that under the right conditions (on the free data and the topology of the initial slice) the elliptic equations yield a unique solution for the rest of the variables.\footnote{The failure of uniqueness is not necessarily a problem from a mathematical point of view, as long as one can make a clear interpretation of what each solution represents. However, it might cause problems (failure of convergence) in numerical simulations (see e.g. the Summary and discussion section of \cite{Baumgarte:2006ug}).} Secondly, the free data must be chosen such that it corresponds to an astrophysically realistic system.

In this paper, we address these two problems in the context of weakly coupled $4\partial$ST theories. We apply the above mentioned two conformal methods to $4\partial$ST gravity and rewrite the $4\partial$ST constraints in terms of the conformally rescaled variables. Since we are mainly interested in isolated systems such as a binary black hole system, in this paper we shall study these elliptic equations on asymptotically Euclidean initial slices. Although a general $4\partial$ST theory does not have as good conformal properties as GR, the implicit function theorem (in Banach spaces) guarantees existence and uniqueness of solutions to the conformally formulated $4\partial$ST constraints at weak coupling.

A special solution of the conformally formulated constraints of vacuum General Relativity is the Bowen-York initial data \cite{Bowen:1980yu} (and its variations \cite{Brandt:1997tf}) that is often used in numerical relativity studies to construct approximate initial data for multiple boosted and rotating black holes. The significance of the Bowen-York method is that it provides a simple analytical solution to the momentum constraint and reduces the problem to the numerical integration of the Hamiltonian constraint. We show that the Bowen-York method can be generalized to $4\partial$ST theories. In this case, we find that the momentum constraint can be solved exactly for the canonical momenta conjugate to the scalar field $\phi$ and the spatial metric $\gamma_{ij}$. This trick reduces the problem of solving the constraints to solving the Hamiltonian constraint and a system of algebraic equations. The role of the extra algebraic equations is to obtain the extrinsic curvature $K_{ij}=-\tfrac12 {\cal L}_n \gamma_{ij}$ and the quantity $K_\phi=-\tfrac12 {\cal L}_n\phi$ from the canonical momenta. Once again, the implicit function theorem implies that the resulting system of equation can be solved, at least for small couplings. In numerical applications, this system can be solved iteratively in the couplings.

The paper is organised as follows. In Section \ref{Sec:CTT}, we describe the conformal transverse traceless method and extend it to $4\partial$ST theories. In more detail, section \ref{sec:gr_ctt} contains the definition of the conformal variables used throughout this paper and the derivation of the CTT version of the constraint equations in General Relativity. Next, in section \ref{sec:gr_ctt_math}, we review some mathematical definitions and theorems on the well-posedness of the boundary value problem of the CTT system for asymptotically Euclidean initial data slices. Section \ref{sec:gr_by} reviews the Bowen-York initial data for General Relativity which is based on the CTT constraint equations. The rest of Section \ref{Sec:CTT} details the extension of the CTT method to $4\partial$ST theories. After providing the CTT constraint equations for the scalar-tensor theory \eqref{4dST} in section \ref{sec:4dst_ctt}, we state and prove a theorem on the well-posedness of the corresponding boundary value problem. Although it is aimed to be self-contained and accessible to readers outside of the mathematical relativity community, section \ref{sec:4dst_ctt} is not essential for understanding the method of initial data construction. Finally, we conclude the discussion of the CTT approach by presenting an extension of the Bowen-York "puncture" data to $4\partial$ST theories.

Section \ref{Sec:CTS} is concerned with the original conformal thin sandwich method. From a mathematical point of view, this is very similar to the CTT approach. However, the CTS method deserves a separate treatment due to its physical significance. After reviewing the CTS equations that gives the basis of a popular way of initial data construction, we state the corresponding existence and uniqueness theorems on asymptotically Euclidean manifolds both in General Relativity and in $4\partial$ST theories. The section is concluded with some remarks on numerical relativity applications.

Sometimes, an extension of the original conformal thin sandwich method (dubbed as XCTS) is used to find initial data in numerical studies \cite{Pfeiffer:2002iy}. The significance of this method is that in General Relativity it provides an elegant way to construct initial data for a binary black hole spacetime in a corotating coordinate system such that the black holes move along quasicircular orbits. However, the mathematical theory of the extended CTS system is more complicated than in the case of the CTT or the original CTS methods. In particular, it is known that this system fails to have a unique solution for certain choices of the "free data". For this reason, we merely put forward a proposal on how to adapt the extended CTS approach to $4\partial$ST theories and we only briefly discuss the expected properties of the resulting system.

Finally, we mention some possible follow-up work in section \ref{sec:discussion}. Appendices \ref{app:adm} and \ref{app:conf} contain some identities that may be helpful for performing the $3+1$ and conformal decompositions of the $4\partial$ST equations.

{\bf Notations.} We follow the conventions of \cite{Wald:1984rg}, unless stated otherwise. Lower case Greek indices $\mu,\nu,\ldots$ are tensor indices and run from $0$ to $3$, The indices $i,j,\ldots$ run from $1$ to $3$ and stand for indices of tensors projected to the initial data hypersurface $\Sigma$.

\section{The conformal transverse traceless decomposition}\label{Sec:CTT}

\subsection{Description of the method in General Relativity}\label{sec:gr_ctt}

In this section, we review the conformal transverse traceless (CTT) approach introduced by York \cite{York:1973ia} to construct an asymptotically flat initial data set. Let $(M,g)$ be a smooth, $4$-dimensional, globally hyperbolic spacetime. An initial data set is a triple $(\Sigma, \gamma_{ij}, K_{ij})$ where $\Sigma$ is a smooth 3-dimensional spacelike submanifold of $M$, $\gamma_{ij}$ is a Riemannian metric on $\Sigma$ and $K_{ij}$ is the extrinsic curvature. If $n^\mu$ denotes the (future-directed) unit normal to $\Sigma$ then the extrinsic curvature is defined as
\be
K_{ij}=-\frac12 {\cal L}_n \gamma_{ij}
\ee
We say that the initial data set is asymptotically flat if it satisfies the following three requirements. First of all, there exists a compact subset $S$ of $\Sigma$ such that $\Sigma\backslash S$ is the disjoint union of a finite number of open sets called ends, each of which is diffeomorphic to the complement of a closed ball in $\mathbb{R}^3$. Secondly, in a suitable coordinate system $\gamma_{ij}-\delta_{ij}$ and $K_{ij}$ approach zero at a suitable rate (made more precise in the next subsection) as $r\equiv \sqrt{x^ix^i}\to \infty$. Furthermore, we require that the initial data set satisfies the constraint equations of the Einstein-matter equation (with the convention $16 \pi G = 1$)
\be
G_{\mu\nu}=\frac12 T_{\mu\nu}
\ee
with some energy-momentum tensor $T_{\mu\nu}$: the Hamiltonian constraint
\be\label{gr_ham}
{\cal H}\equiv \frac12\left(R[D]+\gamma^{i_1i_2}_{j_1j_2}K_{i_1}{}^{j_1}K_{i_2}{}^{j_2}-\varrho\right)=0
\ee
and the momentum constraint
\be\label{gr_mom}
{\cal M}^i\equiv \gamma^{ii_1}_{jj_1}D^jK_{i_1}{}^{j_1}-\frac12 \mathfrak{p}^i=0.
\ee
In these equations, $D$ is the covariant derivative associated with $\gamma$, $R[D]$ is the corresponding Ricci scalar, $\varrho$ and $\mathfrak{p}^i$ are the energy and momentum densities of $T_{\mu\nu}$, i.e.
\be
\varrho \equiv T_{\mu\nu}n^\mu n^\nu,\qquad \qquad \mathfrak{p}^i\equiv T_{\mu\nu}n^\mu\gamma^{i\nu},
\ee
and finally,
\be
\gamma^{i_1...i_n}_{j_1...j_n}\equiv n! \gamma^{i_1}_{[j_1}\ldots \gamma^{i_n}_{j_n]}
\ee

To construct an initial data set obeying the above conditions, we follow the recipe of York and perform a conformal transformation on the metric $\gamma$, with conformal factor $\psi$. The conformal metric will be denoted by $\tilde \gamma$:
\be\label{conf_metric}
\gamma_{ij}=\psi^{4}\tilde \gamma_{ij}.
\ee
The inverse of $\tilde \gamma_{ij}$ will be denoted by $\tilde\gamma^{ij}$ so that $\tilde \gamma^{ij}=\psi^4 \gamma^{ij}$. In particular, note that $\tilde\gamma^{ij}\neq \tilde\gamma_{kl}\gamma^{ik}\gamma^{jl}$! Henceforth, indices of tensor fields whose notations contain a tilde will be raised and lowered with $\tilde\gamma^{ij}$ and $\tilde \gamma_{ij}$, respectively. On the other hand, indices of tensor fields denoted by letters without a tilde will be raised and lowered using $\gamma$.

It is useful to decompose $\tilde K_{ij}$ to a trace part and a traceless part as
\be\label{extr_K_trace}
K_{ij}=\psi^{-2}\tilde A_{ij}+\frac13 K \gamma_{ij}
\ee
with $K\equiv \gamma^{ij}K_{ij}$. Using the new variables $(\psi, \tilde \gamma, \tilde A, K)$, the Hamiltonian constraint can be rewritten as
\be\label{Ham_GR_v2}
-\frac14\psi^5{\cal H}\equiv \tilde \gamma^{kl}\tilde D_k\tilde D_l \psi-\frac18 \psi  R[\tilde D]-\frac{1}{12}\psi^5 K^2+\frac18\psi^{-7}\tilde A_{kl}\tilde A^{kl}+\frac18 \psi^{-3}\tilde \varrho=0
\ee
This conformal version of the Hamiltonian constraint is called the Lichnerowicz equation. One can also rewrite the momentum constraint using the new variables as
\be\label{mom_GR_v2}
\psi^{10}{\cal M}^i\equiv \tilde D_j \tilde A^{ij}-\frac23 \psi^6 \tilde \gamma^{ij}\tilde D_j K-\frac12 \tilde p^i=0.
\ee
Here, $\tilde D$ is the covariant derivative associated with $\tilde \gamma$, $R[\tilde D]$ is the corresponding Ricci scalar and we additionally introduced the conformally rescaled energy and momentum density
\be\label{conf_rho_p}
\tilde \varrho \equiv \psi^8 \varrho \qquad\qquad \tilde{ \mathfrak{p}}^i\equiv \psi^{10} \mathfrak{p}^i.
\ee
There are two main reasons for choosing this particular scaling. The first one is a mathematical reason: this scaling is well-suited for proofs of existence and uniqueness of the constraint system. Secondly, it has some physical significance as well: if $(\tilde \varrho, \tilde{ \mathfrak{p}}^i)$ satisfies the dominant energy condition
$$\tilde \varrho \geq \sqrt{\tilde \gamma_{ij}\tilde{ \mathfrak{p}}^i\tilde{\mathfrak{p}}^j} $$
then so does the physical energy and momentum densities, i.e.
$$\varrho \geq \sqrt{ \gamma_{ij} \mathfrak{p}^i \mathfrak{p}^j}  $$
If the matter source is a real scalar field then it is useful to define
\be\label{def_Kphi}
K_\phi\equiv-\frac12\mathcal{L}_n\phi, \qquad \qquad \tilde K_\phi\equiv \psi^6 K_\phi
\ee
in analogy with the extrinsic curvature. For a minimally coupled scalar field (theory \eqref{4dST} with $\epsilon_1,\epsilon_2=0$), one can express the energy and momentum densities of the scalar field in terms of the conformal variables 
\bea
\varrho&=&\frac12\left(4\psi^{-12}\tilde K_\phi^2+\psi^{-4}\tilde \gamma^{ij}\partial_i\phi\partial_j\phi\right)+V(\phi) \label{scalar_rho} \\
\mathfrak{p}^i&=&2\psi^{-10} \tilde K_\phi \tilde \gamma^{ij}\partial_i\phi \label{scalar_p}
\eea
We can see that in the expression for $\varrho$ there are terms with different conformal scaling and hence for a scalar field it is not necessarily beneficial to use the variable $\tilde \varrho$ of equation \eqref{conf_rho_p}. However, $\mathfrak{p}^i$ is York-scaled with
\be
\tilde{ \mathfrak{p}}^i=2 \tilde K_\phi \tilde \gamma^{ij}\partial_i\phi.
\ee
To solve the momentum constraint, York proposed a decomposition of the conformal extrinsic curvature $\tilde A_{ij}$ to a longitudinal and transverse-traceless (TT) part
\be\label{ctt_A}
\tilde A^{ij}=\tilde A^{ij}_{\rm TT}+\tilde A^{ij}_{\rm L}
\ee
To make it clear, the tracelessness and the transversality of $\tilde A_{\rm TT}$ can be expressed as
\be
\tilde \gamma_{ij}\tilde A^{ij}_{\rm TT}=0 \qquad \qquad \tilde D_i \tilde A^{ij}_{\rm TT}=0
\ee
As explained below, the longitudinal piece $\tilde A_L$ can be expressed as the action of the {\it conformal Killing operator} $\tilde L$ on a vector field $Y^i$, that is,
\be\label{conf_KVF}
\tilde A_L^{ij}=(\tilde L Y)^{ij}\equiv \tilde D^i Y^j+\tilde D^j Y^i-\frac23 \tilde \gamma^{ij}\tilde D_kY^k
\ee
The existence and uniqueness of such a decomposition hinges on the existence and uniqueness of the solution to the equation
\be
\tilde \Delta_L Y^i=\tilde D_j\tilde A^{ij}
\ee
where the differential operator $\tilde\Delta_L$ is called the {\it conformal vector Laplacian} and it can be defined with its action on a vector field
\be\label{conf_vector_laplace}
\tilde\Delta_L Y^i\equiv \tilde D_j (\tilde L Y)^{ij} = \tilde D_j \tilde D^j Y^i+\frac13 \tilde D^i \tilde D_j Y^j+\tilde R^i{}_j Y^j.
\ee
In terms of the variables $(Y,\tilde \gamma, \tilde K)$, the momentum constraint is
\be\label{gr_conf_mom}
\tilde \Delta_L Y^i-\frac23 \psi^{6}\tilde \gamma^{ij}\tilde D_j K=\frac12 \tilde{ \mathfrak{p}}^i
\ee
We can see that the advantage of the variables $(\psi,\tilde \gamma, \tilde A_{\rm TT}, K, Y)$ is that the Hamiltonian and the momentum constraints take the form of a system of four elliptic partial differential equations consisting of \eqref{Ham_GR_v2} and \eqref{gr_conf_mom}. These equations are to be solved for the variables $(\psi, Y)$. The rest of the variables $(\tilde\gamma, \tilde A_{\rm TT}, K)$ play the role of sources to be chosen freely.

In general relativity, it has been demonstrated \cite{Choquet-Bruhat:2009xil} that the above procedure actually leads to a unique asymptotically flat initial data set for a suitable choice of $(\tilde \gamma, \tilde A_{\rm TT}, K)$. In the next subsection, we review some theorems (for both vacuum gravity and gravity minimally coupled to a scalar field) containing conditions on the free data that guarantee the existence of a unique solution for $(\psi, Y)$ with the desired regularity and asymptotic fall-off.

\subsection{Mathematical results in General Relativity}\label{sec:gr_ctt_math}

We continue the discussion by stating some of the previous statements in a mathematically more precise form (see e.g. \cite{Choquet-Bruhat:2009xil} and references therein). We begin with a brief discussion of weighted Sobolev spaces that capture the desired asymptotic fall-off requirements. In this section, we consider initial data surfaces of $n\geq 3$ dimensions.

Given coordinates $x^i$ on the initial slice $\Sigma$, let $\sigma(x)\equiv (1+|x|^2)^{1/2}$ with $|x|\equiv \sqrt{\delta_{ij}x^ix^j}$ and let $1\leq p \leq \infty$, $\delta\in \mathbb{R}$, $s\in \mathbb{N}$. Given a tensor field $u$ of some given type on $\Sigma$, one may define the weighted Sobolev norm
\be
|| u||_{p,s,\delta}\equiv \sum\limits_{0\leq |m|\leq s}||\sigma^{|m|+\delta}(x) D^m u||_{L^p(\Sigma)}.
\ee
Then we say that the weighted Sobolev space $W^p_{s,\delta}(\Sigma)$ is the space of tensor fields on $\Sigma$ whose norm $||\cdot ||_{p,s,\delta}$ is finite. If $\Sigma$ is a $3$-manifold and a tensor field $u$ on $\Sigma$ is in $ W^p_{s,\delta}(\Sigma)$, then it must fall off faster than $r^{-\delta-3/p}$ as $r\to \infty$ and similarly its derivatives $D^m u$ must fall off faster than $r^{-\delta-3/p-|m|}$. Hence, the role of the parameter $\delta$ is to provide additional information on the the asymptotic behaviour of the tensor field.

Now we can define the notion of an asymptotically Euclidean manifold endowed with a Riemannian metric in a certain weighted Sobolev class.

\begin{definition}
An $n$-dimensional manifold $(M,\gamma)$ is said to be {\it asymptotically Euclidean of class} $W^p_{\sigma,\rho}$ if
\begin{itemize}
    \item[(i)] There exists a finite number of open sets $\{E_I\}$ called ends, a compact set $S$ and a set of diffeomorphisms $\{\Phi_I\}$ such that $\Sigma\backslash S=\bigcup\limits_{I} E_I$ and $\Phi_I$ maps each $E_I$ to a complement of a closed ball in $\mathbb{R}^n$.
    \item[(ii)] In each end $(\Phi_I^\star\gamma)_{ij}-\delta_{ij}\in W^p_{\sigma,\rho}$ with $\sigma>\frac{n}{p}$ and $\rho>-n/p$.
\end{itemize}
\end{definition}

Requirement (ii) in the above definition captures the condition that (in asymptotically Euclidean coordinate system) the components of the metric are $\gamma_{ij}=\delta_{ij}+{\cal O}(r^{-\lambda})$ as $r\to \infty$ where $\lambda$ is an arbitrary positive real number. This is a less restrictive condition than the usual assumption of $\gamma_{ij}=\delta_{ij}+{\cal O}(r^{-1})$. In the following lemma we collect some of the useful properties of weighted Sobolev spaces \cite{cantor1977,CM_1981__43_3_317_0}.

\begin{lemma}\label{w_sobolev_properties}
Properties of weighted Sobolev spaces.
\begin{itemize}
    \item[(i)] Let $s>\frac{n}{p}+k$. Then $W^p_{s,\delta}(\mathbb{R}^n)\subset C^k(\mathbb{R}^n)$.
    \item[(ii)] Pointwise multiplication satisfies $W^p_{s_1,\delta_1}(\Sigma)\times W^p_{s_2,\delta_2}(\Sigma) \to W^p_{s_3,\delta_3}(\Sigma)$ provided that $s_3<s_1+s_2-\frac{n}{p}$ and $\delta_3<\delta_1+\delta_2+\frac{n}{p}$.
    
    Moreover, let $p> 1$, $s>\frac{n}{p}$, $0\leq l \leq s$ and define
    \be\label{def:sob_1}
    W^p_{s,\delta}(1,\Sigma)=\left\{ f:\Sigma\to\mathbb{R} ~:~ f-1\in W^p_{s,\delta}(\Sigma)\right\}.
    \ee
    Then pointwise multiplication induces smooth maps
    \bea
    W^p_{s,\delta}(1,\Sigma)\times W^p_{s-l,\delta+l}(\Sigma) \to W^p_{s-l,\delta+l}(\Sigma) \\ 
    W^p_{s,\delta}(1,\Sigma)\times W^p_{s-l,\delta+l}(1,\Sigma) \to W^p_{s-l,\delta+l}(1,\Sigma)
    \eea
\end{itemize}
\end{lemma}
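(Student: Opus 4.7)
The plan is to establish both parts via a dyadic decomposition of $\mathbb{R}^n$ (or of each end of $\Sigma$) into annular regions on which standard unweighted Sobolev estimates apply, and then to sum the localised estimates while tracking the weight factor $\sigma(x)\sim |x|$ at large $|x|$. This reduces both claims to the classical Sobolev embedding and multiplication theorems on a single fixed annulus, together with a bookkeeping calculation for how the rescaling transforms the norms.

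For (i), I would fix a smooth partition of unity subordinate to the dyadic cover $A_j=\{2^{j-1}<|x|<2^{j+1}\}$, $j\geq 0$, together with a ball $B$ about the origin. On each $A_j$, rescale by $y=2^{-j}x$ so that $A_j$ maps to the fixed annulus $A=\{1/2<|y|<2\}$, and set $\tilde u_j(y)=u(2^j y)$. The classical Sobolev embedding $W^{s,p}(A)\hookrightarrow C^k(A)$ for $s>n/p+k$ gives $\|\tilde u_j\|_{C^k(A)}\leq C\sum_{|m|\leq s}\|D^m\tilde u_j\|_{L^p(A)}$. Undoing the scaling produces factors $2^{j|m|}$ from derivatives and $2^{-jn/p}$ from the change of variables in the $L^p$ norm; since $\sigma\sim 2^j$ on $A_j$, the weighted norm absorbs these through $\|D^m u\|_{L^p(A_j)}\leq 2^{-j(|m|+\delta)}\|u\|_{p,s,\delta}$. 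Combining everything yields the uniform pointwise bound $|x|^{k+\delta+n/p}|D^k u(x)|\lesssim \|u\|_{p,s,\delta}$, which gives $u\in C^k(\mathbb{R}^n)$ with a continuous inclusion.

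For (ii), I would re-use the same dyadic cover. The classical Sobolev multiplication estimate $\|fg\|_{W^{s_3,p}(A)}\leq C\|f\|_{W^{s_1,p}(A)}\|g\|_{W^{s_2,p}(A)}$ holds on the fixed annulus under $s_3\leq \min(s_1,s_2)$ and $s_3<s_1+s_2-n/p$. Applying this to $\tilde f_j\tilde g_j$ on $A$ and undoing the scaling introduces a net power $2^{j(s_3-s_1-s_2+n/p)}$ from the mismatched derivative orders and $L^p$ change of variables; the weight condition $\delta_3<\delta_1+\delta_2+n/p$ is then precisely what is needed for the sum over $j$ to be geometrically convergent and dominated by $\|f\|_{p,s_1,\delta_1}\|g\|_{p,s_2,\delta_2}$. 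This produces a bounded bilinear multiplication map, which is automatically smooth between Banach spaces. For the ``moreover'' statement, the affine space $W^p_{s,\delta}(1,\Sigma)$ is modelled on $W^p_{s,\delta}(\Sigma)$; writing $(1+f)g=g+fg$ and $(1+f)(1+g)=1+f+g+fg$ reduces the smoothness assertion to applying the already established bilinear map to $fg$, using that $s>n/p$ makes $W^p_{s,\delta}(\Sigma)$ an algebra (by taking $s_1=s_2=s_3=s$ in (ii)) and that multiplication by an element of $W^p_{s,\delta}$ is a bounded operator on $W^p_{s-l,\delta+l}$ for $0\leq l\leq s$.

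The main obstacle is the weight bookkeeping: verifying that the powers of $2^j$ produced by the rescaling combine to give exactly the sharp-looking conditions $s_3<s_1+s_2-n/p$ and $\delta_3<\delta_1+\delta_2+n/p$, and that the corresponding series in $j$ converge. The strict inequalities are needed precisely because the Sobolev multiplication inequality on a fixed domain holds only in the strict sub-critical regime, and because one needs a small spare factor $2^{-\varepsilon j}$ to sum the geometric series over dyadic scales. The companion statements on $\Sigma$ (as opposed to $\mathbb{R}^n$) follow from the same argument on each end, with the compact piece handled by ordinary Sobolev theory.
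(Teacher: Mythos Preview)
The paper does not give its own proof of this lemma; it simply states the result and cites \cite{cantor1977,CM_1981__43_3_317_0}, followed by an informal paragraph explaining the intuition. There is therefore nothing in the paper to compare your argument against at the level of technique.

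Your dyadic-decomposition-and-rescale strategy is essentially the standard one used in the cited references (Cantor; Choquet-Bruhat--Christodoulou), and the outline you give is correct in spirit. One small point worth tightening: when you argue that the affine multiplication maps are smooth by reducing $(1+f)g=g+fg$ to an application of (ii) with $s_1=s$, $s_2=s_3=s-l$, $\delta_1=\delta$, $\delta_2=\delta_3=\delta+l$, the weight condition in (ii) becomes $\delta+l<\delta+(\delta+l)+n/p$, i.e.\ $\delta>-n/p$. This is not listed among the hypotheses of the ``moreover'' clause as stated, so you should either flag it as an implicit assumption (it holds automatically in the asymptotically Euclidean setting used elsewhere in the paper) or note that the affine multiplication result really needs $\delta>-n/p$. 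Apart from that, your sketch is sound and matches the standard literature proof.
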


Part (i) of this lemma just tells us that a sufficiently weighted Sobolev-regular tensor field are ($k$ times) continuously differentiable. The rough version of the second part of the lemma implies the following. Given a tensor field $T$ which is expressed as a contraction (pointwise multiplication) of several sufficiently regular tensor fields $S_A$ (i.e. $S_A\in W^p_{s_A,\delta_A}(\Sigma)$ and $s_A$ is large enough). Then the weighted Sobolev regularity of $T$ is the same as that of the least regular factor $S_A$.

Now let us turn to the actual discussion of the conformally formulated constraint equations and state some previous results. We start with the momentum constraint.

\begin{theorem}\label{gr_ctt_momentum}
Let $(\Sigma,\gamma)$ be a $W^p_{s,\delta}$ asymptotically Euclidean manifold with $p>\frac{n}{2}$, $s>\frac{n}{p}$ and $-\frac{n}{p}<\delta <n-2-\frac{n}{p}$. Assume that we are given $\tilde A_{\rm TT}\in W^p_{s-1,\delta+1}(\Sigma)$.
\begin{itemize}
    \item[(i)] Consider the case of maximal slicing $K\equiv 0$. Then the constraints decouple and the momentum constraint has a unique solution $Y\in W^p_{s,\delta}(\Sigma)$.
    \item[(ii)]  Consider the case of nonzero $K\in W^p_{s-1,\delta+1}(\Sigma)$ and suppose that we are given $\psi>0$, $\psi\in W^p_{s,\delta}(1,\Sigma)$. Then the momentum constraint has a unique solution $Y\in W^p_{s,\delta}(\Sigma)$.
\end{itemize}

\end{theorem}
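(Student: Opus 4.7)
The plan is to reduce both parts to the single statement that the conformal vector Laplacian $\tilde\Delta_L$ is a Banach space isomorphism $W^p_{s,\delta}(\Sigma)\to W^p_{s-2,\delta+2}(\Sigma)$. Writing the momentum constraint \eqref{gr_conf_mom} in the form $\tilde\Delta_L Y^i=F^i$ with
\[
F^i \;=\; \tfrac12\, \tilde{\mathfrak{p}}^i \;+\; \tfrac23 \,\psi^{6}\,\tilde\gamma^{ij}\,\tilde D_j K,
\]
the second term drops out in case (i), while in case (ii) Lemma \ref{w_sobolev_properties}(ii) combined with $\psi\in W^p_{s,\delta}(1,\Sigma)$ and $K\in W^p_{s-1,\delta+1}(\Sigma)$ places it in $W^p_{s-2,\delta+2}(\Sigma)$. (I tacitly assume the matter contribution $\tilde{\mathfrak{p}}$ sits in the same codomain, as happens in vacuum and in the minimally coupled scalar case of \eqref{scalar_p}.) Once the isomorphism is in hand, the open mapping theorem supplies a continuous inverse, producing the desired unique $Y$.

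\textbf{Mapping properties and ellipticity.} First I would verify that the coefficients of $\tilde\Delta_L$ have the regularity needed for it to be a bounded map between the stated spaces. From $\tilde\gamma_{ij}-\delta_{ij}\in W^p_{s,\delta}$, the Christoffel symbols of $\tilde\gamma$ lie in $W^p_{s-1,\delta+1}$ and its Ricci tensor in $W^p_{s-2,\delta+2}$, so the multiplication part of Lemma \ref{w_sobolev_properties}(ii) applied to \eqref{conf_vector_laplace} delivers boundedness. The principal symbol acting on a vector $V$ by
\[
\sigma(\xi)V^i \;=\; \tilde\gamma^{jk}\xi_j\xi_k\, V^i \;+\; \tfrac13 \tilde\gamma^{ik}\xi_k \xi_j V^j
\]
satisfies $\sigma(\xi)V\cdot V = |\xi|_{\tilde\gamma}^2|V|_{\tilde\gamma}^2 + \tfrac13 (\tilde\gamma^{ij}\xi_i V_j)^2 > 0$ whenever $\xi,V\neq 0$, so $\tilde\Delta_L$ is elliptic.

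\textbf{Fredholm index and trivial kernel.} Because $\tilde\gamma$ is asymptotically Euclidean, $\tilde\Delta_L$ asymptotes at infinity to the flat conformal vector Laplacian on $\mathbb{R}^n$. The standard theory of asymptotically flat elliptic systems on weighted Sobolev spaces (\cite{Choquet-Bruhat:2009xil} and references therein) then shows that $\tilde\Delta_L$ is Fredholm of index zero provided $\delta$ avoids the exceptional indicial weights of the flat operator; the range $-n/p<\delta<n-2-n/p$ is precisely such an admissible range. To see that the kernel is trivial, take $Y\in W^p_{s,\delta}(\Sigma)$ with $\tilde\Delta_L Y = 0$. The decay provided by Lemma \ref{w_sobolev_properties}(i) is sufficient to integrate by parts with no contribution at infinity, giving
\[
0 \;=\; \int_\Sigma Y_i\, \tilde D_j (\tilde L Y)^{ij}\sqrt{\tilde\gamma}\,d^nx \;=\; -\tfrac12 \int_\Sigma (\tilde L Y)_{ij}(\tilde L Y)^{ij}\sqrt{\tilde\gamma}\,d^n x,
\]
so $\tilde L Y = 0$, i.e.\ $Y$ is a conformal Killing field of $\tilde\gamma$. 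The weighted decay rules out any asymptotic matching to the non-decaying Euclidean conformal Killing fields (translations, rotations, dilations, special conformal transformations), and unique continuation for this overdetermined first-order system forces $Y\equiv 0$. Combining with the index zero Fredholm property yields the claimed isomorphism.

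\textbf{Main obstacle.} The technical heart of the argument is the Fredholm step: proving that $\tilde\Delta_L$ is Fredholm of index zero between the stipulated weighted Sobolev spaces. This rests on the indicial analysis of the flat conformal vector Laplacian on $\mathbb{R}^n$, which pins down the admissible weight window $-n/p<\delta<n-2-n/p$, and on a parametrix construction combining interior elliptic regularity with weighted estimates near infinity. The kernel-triviality argument is conceptually clean but also relies on the fact that, globally, decaying solutions of the conformal Killing equation on an asymptotically Euclidean manifold vanish identically.
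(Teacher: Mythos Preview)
Your proof plan is correct and follows the standard route. Note, however, that the paper does \emph{not} supply its own proof of this theorem: it is stated in Section~\ref{sec:gr_ctt_math} as a review result from the literature (with a reference to \cite{Choquet-Bruhat:2009xil}), so there is no in-paper argument to compare against line by line. What the paper does do is later isolate the key ingredient you identified as Lemma~\ref{conf_laplace_lemma}: the conformal vector Laplacian has kernel equal to the space of conformal Killing fields, and on an asymptotically Euclidean manifold of class $W^p_{s,\delta}$ there are no such fields in $W^p_{s,\delta}$, so $\tilde\Delta_L$ is an isomorphism $W^p_{s,\delta}\to W^p_{s-2,\delta+2}$. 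Your reduction of both (i) and (ii) to this isomorphism, together with the Sobolev multiplication check that the source $F^i$ lands in $W^p_{s-2,\delta+2}$, is exactly how the paper uses Lemma~\ref{conf_laplace_lemma} elsewhere (e.g.\ in the proof of Theorem~\ref{thm:4dst_ctt}).

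One technical caveat worth tightening: the integration-by-parts step showing $\tilde L Y=0$ requires the boundary term $\int_{S_R} Y_i(\tilde LY)^{ij}n_j$ to vanish as $R\to\infty$, and the raw decay $\delta>-n/p$ does not by itself guarantee this for all $n$. The usual fix is a bootstrap: once $\tilde\Delta_L Y=0$, weighted elliptic regularity lets you push $Y$ into $W^p_{s,\delta'}$ for any $\delta'$ in the admissible window, in particular large enough that the surface term vanishes. Alternatively, one argues by density of compactly supported fields. Either way the conclusion stands, but your sentence ``the decay provided by Lemma~\ref{w_sobolev_properties}(i) is sufficient'' is doing more work than it appears to.
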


Next we turn to the Lichnerowicz equation which is highly non-linear and therefore more subtle. We specifically concentrate on the vacuum case and assume that $\Sigma$ is a maximal slice. Similar results hold when $\Sigma$ has (nearly or exactly) constant mean curvature. Before stating the theorem on the Lichnerowicz equation, we need to discuss a topological condition on the initial data surface to ensure uniqueness.

Consider an $n$-dimensional asymptotically Euclidean manifold $(\Sigma,\gamma)$ of class $W^p_{s,\delta}(\Sigma)$. We say that such a manifold is in the {\it positive Yamabe class} if the functional
\be\label{yamabe_inv_def}
    I_\gamma[f]\equiv\int_\Sigma d^n x\sqrt{\gamma} \left[ \gamma^{ij}\partial_i f\partial_j f+\frac18 R[\tilde D]f^2\right]>0
\ee
is positive for every nontrivial function $f$ of type $W^p_{2,\rho}(\Sigma)$ with $\rho>-\frac{n}{p}+\frac{n}{2}-1$. Interestingly, the functional $I_\gamma[f]$ is conformally invariant in the sense that
\be
I_\gamma[f]=I_{\gamma^\prime}[f^\prime], \qquad \qquad \gamma^\prime = \theta^{\tfrac{4}{n-2}}\gamma, \qquad f^\prime = \theta^{-1} f
\ee

\begin{theorem}\label{gr_lichnerowicz}
We make the following hypotheses.
\begin{itemize}
    \item[(i)] Let $(\Sigma,\tilde \gamma)$ be an $n$-dimensional asymptotically Euclidean manifold of type $W^p_{s,\delta}$ with
    $$p>\frac{n}{2},\qquad s>\frac{n}{p} \qquad {\rm and} \qquad -1+\frac{n}{2}-\frac{n}{p}<\delta<-2+n-\frac{n}{p}.$$ 
    \item[(ii)] Assume that we are given $\tilde A_{TT} \in W^p_{s-1,\delta+1}(\Sigma)$ and $K=0$ as free data.
    \item[(iii)] Suppose that $(\Sigma, \tilde \gamma)$ is in the positive Yamabe class.

    \item[(iv)] Finally, assume that we are given a vector field $Y\in W_{s,\delta}^p(\Sigma)$.
\end{itemize}
Under hypotheses (i)-(iv) the Lichnerowicz equation has a unique solution $\psi>0$ with $\psi\in W^p_{s,\delta}(1,\Sigma)$.
\end{theorem}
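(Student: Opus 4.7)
The plan is to reduce the Lichnerowicz equation to a canonical scalar-flat form using the positive Yamabe hypothesis, then to construct a solution by the method of sub- and super-solutions, and finally to establish uniqueness via a maximum principle. Under hypotheses~(ii) and~(iv), the combined tensor $\tilde A^{ij} \equiv \tilde A^{ij}_{\rm TT} + (\tilde L Y)^{ij}$ belongs to $W^p_{s-1,\delta+1}(\Sigma)$, and Lemma~\ref{w_sobolev_properties} ensures that every coefficient in \eqref{Ham_GR_v2} (specialised to $K=0$ and $\tilde\varrho=0$) has the regularity needed to treat the problem in the weighted Sobolev framework.

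Next I would use hypothesis~(iii) to reduce to a scalar-flat representative. The conformal Laplacian $L_{\tilde\gamma} \equiv -8\tilde D^k\tilde D_k + R[\tilde D]$ is formally self-adjoint with associated quadratic form proportional to $I_{\tilde\gamma}$, which is positive by assumption; standard elliptic theory for asymptotically Euclidean manifolds in the weight range dictated by~(i) then shows that $L_{\tilde\gamma}$ is an isomorphism $W^p_{s,\delta}(\Sigma)\to W^p_{s-2,\delta+2}(\Sigma)$, so the equation $L_{\tilde\gamma}\theta=0$ admits a unique solution with $\theta-1\in W^p_{s,\delta}(\Sigma)$, and a maximum principle argument combined with the positivity of $I_{\tilde\gamma}$ gives $\theta>0$. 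Replacing $\tilde\gamma_{ij}\mapsto \theta^4\tilde\gamma_{ij}$ and $\tilde A_{ij}\mapsto \theta^{-2}\tilde A_{ij}$ --- the York scaling under which the Lichnerowicz equation is covariant, with $\psi\mapsto \theta^{-1}\psi$ --- we may assume without loss of generality that $R[\tilde D]=0$, and the equation to solve reduces to
\be
\tilde D^k \tilde D_k \psi + \tfrac{1}{8}\psi^{-7}\tilde A_{kl}\tilde A^{kl} = 0, \qquad \psi-1\in W^p_{s,\delta}(\Sigma).
\ee

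For existence, I would apply the sub- and super-solution method. Writing $\psi = 1+u$, the choice $u_-\equiv 0$ is trivially a sub-solution since $\tilde A_{kl}\tilde A^{kl}\geq 0$. For a super-solution, I would solve the linear Poisson problem $\tilde D^k\tilde D_k u_+ = -\tfrac{1}{8}\tilde A_{kl}\tilde A^{kl}$ in $W^p_{s,\delta}(\Sigma)$; the maximum principle yields $u_+\geq 0$, and since $(1+u_+)^{-7}\leq 1$, $u_+$ is indeed a super-solution. A monotone iteration in the order interval $[u_-,u_+]$, supplemented by the elliptic estimates of Lemma~\ref{w_sobolev_properties}, produces a solution $u\in W^p_{s,\delta}(\Sigma)$ with $\psi\geq 1>0$; undoing the preliminary conformal change returns the desired positive $\psi$ with $\psi\in W^p_{s,\delta}(1,\Sigma)$.

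Uniqueness would follow from a maximum principle: if $\psi_1,\psi_2$ are two solutions of the reduced equation, their difference $v=\psi_1-\psi_2$ satisfies $\tilde D^k\tilde D_k v = c\, v$ with $c=\tfrac{7}{8}\tilde A_{kl}\tilde A^{kl}\,\xi^{-8}\geq 0$ for some intermediate $\xi$ (mean value theorem applied to $t\mapsto t^{-7}$); since $v\to 0$ at infinity, this forces $v\equiv 0$. The main obstacle will be the careful handling of the singular nonlinearity $\psi^{-7}$: the iteration scheme must keep the iterates uniformly bounded away from zero while preserving the prescribed decay at infinity, and this is precisely what the preliminary reduction to $R[\tilde D]=0$ together with the non-negative super-solution construction make tractable.
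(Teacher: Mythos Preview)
The paper does not actually prove this theorem: Section~\ref{sec:gr_ctt_math} is explicitly a review of known results, and Theorem~\ref{gr_lichnerowicz} is quoted from the literature (Choquet-Bruhat's monograph \cite{Choquet-Bruhat:2009xil}) without proof. So there is no ``paper's own proof'' to compare your proposal against.

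That said, your outline is precisely the standard argument one finds in that reference: use the positive Yamabe hypothesis to conformally transform to a scalar-flat representative, then run the sub-/super-solution machinery on the reduced equation, and close with a maximum principle for uniqueness. The logical skeleton is correct, and the choice $u_-=0$, $u_+$ solving the linear Poisson problem, is exactly the canonical barrier pair in this situation. Two small points worth tightening: (a) you have written the conformal Laplacian and the conformal weight ($\theta^4$, coefficient $8$) in their $n=3$ form, whereas the theorem is stated for general $n$ --- the exponents should be $\theta^{4/(n-2)}$ and $\tfrac{4(n-1)}{n-2}$ respectively; (b) the step ``$L_{\tilde\gamma}\theta=0$ admits a unique solution with $\theta-1\in W^p_{s,\delta}$'' is really the inhomogeneous problem $L_{\tilde\gamma}(\theta-1)=-R[\tilde D]$, and the positivity $\theta>0$ needs the Yamabe hypothesis more directly than a bare maximum principle (one typically argues via the first eigenvalue or by noting that a sign change of $\theta$ would produce a test function violating $I_{\tilde\gamma}>0$). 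Neither of these is a genuine gap in the strategy.
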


Note that there are different versions of the above theorem which do not require the data to be of the positive Yamabe class but in those cases the other conditions are more restrictive (see \cite{Choquet-Bruhat:2009xil} for a more complete account).

Consider now the case when we include a minimally coupled scalar field with nonnegative potential $V(\phi)\geq 0$. As noted in \eqref{scalar_rho}, $\varrho$ contains terms with different conformal scalings, introducing "bad" terms in the Lichnerowicz equation. Nevertheless, one still has the following theorem \cite{Choquet2000}.

\begin{theorem}\label{thm_esf_lichnerowicz}
Let $(\Sigma,\gamma)$ be a $3$-dimensional asymptotically Euclidean manifold of class $W^p_{s,\delta}$ with $K=0$, $s>2+\tfrac{3}{p}$ and $\delta>-\tfrac{3}{p}$. We further require that 
$$R[\tilde D]-\frac12\tilde \gamma^{ij}\partial_i\phi\partial_j\phi>0.$$
Then there exists an open set of values for the free data $(\tilde A_{TT},\phi,\tilde K_\phi)$ satisfying $\phi-\phi_\infty\in W^p_{s,\delta}(\Sigma)$ where $\phi_\infty$ is the asymptotic (constant) value of the scalar field, $\tilde A_{TT},\tilde K_\phi \in W^p_{s-1,\delta+1}(\Sigma)$ and $V(\phi)\in W^p_{s-2,\delta+2}(\Sigma)$ such that the Lichnerowicz equation has a solution $\psi>0$ of class $W^p_{s,\delta}(1,\Sigma)$.
\end{theorem}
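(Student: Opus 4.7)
The plan is to apply the implicit function theorem in Banach spaces to the Lichnerowicz operator, perturbing about a reference configuration with trivial scalar-field data. First I would combine \eqref{Ham_GR_v2} with $K=0$ and the scalar source \eqref{scalar_rho} to cast the Lichnerowicz equation as
\[
F(\psi;\tilde A_{TT},\phi,\tilde K_\phi)=\tilde D^k \tilde D_k \psi-\tfrac{1}{8}\psi\Bigl(R[\tilde D]-\tfrac12\tilde\gamma^{ij}\partial_i\phi\,\partial_j\phi\Bigr)+\tfrac18\psi^{-7}\tilde A_{kl}\tilde A^{kl}+\tfrac14\psi^{-7}\tilde K_\phi^2+\tfrac18\psi^5 V(\phi)=0,
\]
where $\tilde A^{ij}=\tilde A^{ij}_{TT}+(\tilde L Y)^{ij}$ and $Y$ is the unique solution of the momentum constraint provided by Theorem \ref{gr_ctt_momentum}(i) applied to the source $\tilde{\mathfrak{p}}^i=2\tilde K_\phi\tilde\gamma^{ij}\partial_j\phi$. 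The multiplication properties in Lemma \ref{w_sobolev_properties}(ii), together with the stated weight assumptions, guarantee that $F$ is a smooth map from $W^p_{s,\delta}(1,\Sigma)\times \mathcal{F}$ to $W^p_{s-2,\delta+2}(\Sigma)$, where $\mathcal{F}$ denotes the Banach space of admissible free data.

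For the reference point I would take $(\tilde A_{TT},\phi,\tilde K_\phi)=(0,\phi_\infty,0)$. This choice annihilates the momentum source, hence $Y=0$ and $\tilde A=0$, and the Lichnerowicz equation collapses to the linear conformal Laplace equation $\tilde D^k\tilde D_k \psi-\tfrac{1}{8}\psi R[\tilde D]=0$. The assumed inequality $R[\tilde D]-\tfrac12\tilde\gamma^{ij}\partial_i\phi\,\partial_j\phi>0$, evaluated at the constant value $\phi=\phi_\infty$, forces $R[\tilde D]>0$ pointwise, so $(\Sigma,\tilde\gamma)$ lies in the positive Yamabe class and Theorem \ref{gr_lichnerowicz} furnishes a unique positive reference solution $\psi_0\in W^p_{s,\delta}(1,\Sigma)$.

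The linearisation of $F$ in $\psi$ at the reference point kills every term carrying the scalar or transverse-traceless data and leaves the conformal Laplacian $L\,\delta\psi=\tilde D^k\tilde D_k\,\delta\psi-\tfrac{1}{8}R[\tilde D]\,\delta\psi$. Strict positivity of $R[\tilde D]$ together with the maximum principle gives trivial kernel on $W^p_{s,\delta}(\Sigma)$, and the standard indicial/asymptotic analysis for elliptic operators on asymptotically Euclidean manifolds in the admissible weight range establishes Fredholmness of index zero, hence the isomorphism $L:W^p_{s,\delta}(\Sigma)\to W^p_{s-2,\delta+2}(\Sigma)$. The implicit function theorem then produces a smooth solution map $\psi$ defined on a neighbourhood of $(0,\phi_\infty,0)$ in $\mathcal{F}$ with $\psi-\psi_0\in W^p_{s,\delta}(\Sigma)$, and because $\psi_0$ is bounded below away from zero on $\Sigma$ while the embedding $W^p_{s,\delta}(1,\Sigma)\hookrightarrow C^0(\Sigma)$ supplied by Lemma \ref{w_sobolev_properties}(i) is continuous, positivity $\psi>0$ persists on a (possibly smaller) neighbourhood.

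The step I expect to be hardest is the isomorphism property of $L$ between the two weighted Sobolev spaces for the precise weight range of the theorem. Triviality of $\ker L$ follows cleanly from strict positivity of $R[\tilde D]$ and the maximum principle, but Fredholmness of index zero requires the general asymptotic/indicial theory for elliptic operators on weighted spaces, and one has to verify that $\delta$ avoids the exceptional roots at infinity; I would therefore invoke the corresponding isomorphism statements already cited in the preceding subsection rather than rederive them.
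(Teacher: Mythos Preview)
The paper does not prove this theorem itself; it is quoted from \cite{Choquet2000} as part of the review material (see the sentence immediately preceding the statement). Your proposal supplies a proof where the paper gives none, and the strategy you use---implicit function theorem about a trivial-data reference, with the linearisation collapsing to a Poisson operator with nonnegative zeroth-order coefficient---is exactly the template the paper later deploys for its own results (Theorems \ref{thm:4dst_ctt} and \ref{thm:4dst_cts}). In that sense your approach is both correct and in the spirit of the paper.

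Two small points are worth tightening. First, at the reference $\phi=\phi_\infty$ the term $\tfrac18\psi^5 V(\phi_\infty)$ does not vanish automatically; you should invoke (as the paper remarks just after the theorem) that the hypothesis $V(\phi)\in W^p_{s-2,\delta+2}(\Sigma)$ forces $V(\phi_\infty)=0$, which is what makes the reference equation genuinely the conformal Laplace equation. Second, you appeal to Theorem \ref{gr_lichnerowicz} for the reference solution $\psi_0$, but that theorem carries the stronger lower bound $\delta>\tfrac12-\tfrac{3}{p}$, which is not assumed here; it is cleaner to obtain $\psi_0$ directly from Lemma \ref{poisson_lemma} (with $c=\tfrac18 R[\tilde D]\ge 0$) applied to $u=\psi-1$, and to get positivity from the maximum principle. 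That same lemma is precisely the isomorphism statement you need for $L$, so the indicial-root analysis you flagged as the hardest step is already packaged there and need not be redone.
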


It is worth pointing out that even though Theorem \ref{thm_esf_lichnerowicz} allows for $\phi$ to have a non-zero asymptotic value, the requirement $V(\phi)\in W^p_{s-2,\delta+2}(\Sigma)$ implies that we must have $V(\phi_\infty)=0$. Of course, in general, $\phi\in W^p_{s,\delta}$ does not imply $V(\phi)\in W^p_{s-2,\delta+2}$. In fact, this condition puts an implicit constraint on $V$. However, it also shows that $V$ does not need to be a smooth function.

Finally, we state the corresponding results for the full CTT system of constraints. In the vacuum case, assuming $K=0$, the Hamiltonian and momentum constraints decouple so combining theorems \ref{gr_ctt_momentum} and \ref{gr_lichnerowicz} leads to

\begin{theorem}\label{thm_gr_constraints}
Suppose that the hypotheses (i)-(iii) of Theorem \ref{gr_lichnerowicz} hold. Then the CTT system of constraint equations (\eqref{Ham_GR_v2} and \eqref{gr_conf_mom} with $\tilde\varrho,\tilde{ \mathfrak{p}}=0$) admits a unique solution $\psi\in W^p_{s,\delta}(1,\Sigma)$, $Y\in W^p_{s,\delta}(\Sigma)$ with $\psi>0$.
\end{theorem}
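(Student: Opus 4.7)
The plan is to exploit the decoupling noted in the paragraph preceding the theorem: under the vacuum assumption $\tilde\varrho = 0$, $\tilde{\mathfrak{p}}^i = 0$ and maximal slicing $K = 0$, the conformal momentum constraint \eqref{gr_conf_mom} reduces to the linear homogeneous equation
\be
\tilde \Delta_L Y^i = 0,
\ee
which contains no reference to the conformal factor $\psi$. Since $\psi$ enters the Lichnerowicz equation only through its own derivatives and through $\tilde A_{ij}$, the natural strategy is a two-step procedure: first solve the momentum constraint for $Y$ alone, then substitute the resulting $\tilde A_L^{ij} = (\tilde L Y)^{ij}$ into the Lichnerowicz equation \eqref{Ham_GR_v2} and solve for $\psi$.

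\textbf{Step 1 (momentum constraint).} I would invoke case (i) of Theorem \ref{gr_ctt_momentum} directly to produce a unique $Y \in W^p_{s,\delta}(\Sigma)$. To do this, one must check that the hypotheses of the present theorem imply the hypotheses of that theorem: the regularity assumptions $p>n/2$, $s>n/p$ coincide, and the range $-1 + n/2 - n/p < \delta < -2 + n - n/p$ assumed in Theorem \ref{gr_lichnerowicz}(i) is contained in the wider range $-n/p < \delta < n-2-n/p$ required by Theorem \ref{gr_ctt_momentum}. The free datum $\tilde A_{TT} \in W^p_{s-1,\delta+1}(\Sigma)$ is inherited directly.

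\textbf{Step 2 (Lichnerowicz equation).} With $Y$ in hand, I would feed it into Theorem \ref{gr_lichnerowicz} as hypothesis (iv). The hypothesis is exactly $Y \in W^p_{s,\delta}(\Sigma)$, which is what Step 1 delivers; by a first-order differential operation together with the multiplication property of Lemma \ref{w_sobolev_properties}(ii), the longitudinal part $\tilde L Y$ then lies in $W^p_{s-1,\delta+1}(\Sigma)$, matching the regularity of $\tilde A_{TT}$. Hypotheses (i)--(iii) of Theorem \ref{gr_lichnerowicz} coincide with the corresponding hypotheses of the present theorem, so the theorem yields the unique $\psi > 0$ in $W^p_{s,\delta}(1,\Sigma)$ solving the Lichnerowicz equation.

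I do not anticipate a substantive obstacle. Because $\psi$ does not feed back into the momentum constraint under the vacuum plus maximal-slicing assumption, no fixed-point or implicit-function machinery is needed; the argument is a clean composition of the two cited results. The only care required is the bookkeeping of weighted Sobolev indices in Step 1, to confirm that the range of $\delta$ imposed for the Lichnerowicz result is admissible for the momentum result as well. Uniqueness of the pair $(\psi, Y)$ follows from the uniqueness clauses of Theorems \ref{gr_ctt_momentum}(i) and \ref{gr_lichnerowicz}, applied sequentially in the same order as existence.
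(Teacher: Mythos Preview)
Your proposal is correct and follows precisely the approach the paper indicates: it notes that in vacuum with $K=0$ the constraints decouple, and states the theorem as the combination of Theorems \ref{gr_ctt_momentum} and \ref{gr_lichnerowicz}. Your write-up simply makes explicit the bookkeeping (the inclusion of the $\delta$-range and the regularity of $\tilde L Y$) that the paper leaves implicit.
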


Similarly, in the case of gravity minimally coupled to a scalar field, we have
\begin{theorem}\label{thm_esf_constraints}
Let $(\Sigma,\gamma)$ be a $3$-dimensional asymptotically Euclidean manifold of class $W^p_{s,\delta}$ with $K=0$, $p>\tfrac32$, $s>2+\tfrac{3}{p}$ and $1-\tfrac{3}{p}>\delta>-\tfrac{3}{p}$. Moreover, suppose
$$R[\tilde D]-\frac12\tilde \gamma^{ij}\partial_i\phi\partial_j\phi>0.$$
Then there is an open set of values for $(\tilde A_{TT},\phi,\tilde K_\phi)$ satisfying $\phi-\phi_\infty\in W^p_{s,\delta}(\Sigma)$, $\tilde A_{TT},\tilde K_\phi \in W^p_{s-1,\delta+1}(\Sigma)$ and $V(\phi)\in W^p_{s-2,\delta+2}(\Sigma)$ such that the conformally formulated constraints have a solution $(\psi,Y)$ with $\psi\in W^p_{s,\delta}(1,\Sigma)$, $\psi>0$ and $Y\in W^p_{s,\delta}(\Sigma)$.
\end{theorem}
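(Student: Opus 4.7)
My plan is to combine Theorems~\ref{gr_ctt_momentum}(i) and~\ref{thm_esf_lichnerowicz}, exploiting the fact that when $K=0$ the two constraints decouple: the source $\tilde{\mathfrak{p}}^i$ appearing in \eqref{gr_conf_mom} is given by \eqref{scalar_p} and depends only on the free scalar data and not on $\psi$, so $Y$ can be obtained first and then fed into the Lichnerowicz equation.

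For the momentum step, with $K=0$ equation~\eqref{gr_conf_mom} reduces to the decoupled linear problem $\tilde\Delta_L Y^i = \tilde K_\phi\,\tilde\gamma^{ij}\partial_j\phi$. Lemma~\ref{w_sobolev_properties}(ii), applied under the hypotheses $s>2+3/p$ and $\delta>-3/p$, gives $\tilde K_\phi\,\partial_j\phi \in W^p_{s-1,\delta+1}(\Sigma)$, and multiplication by $\tilde\gamma^{ij}$ (which differs from $\delta^{ij}$ by a $W^p_{s,\delta}$ element) preserves this class. For $n=3$ the restriction $-3/p<\delta<1-3/p$ is exactly the range $-n/p<\delta<n-2-n/p$ required by Theorem~\ref{gr_ctt_momentum}(i), which therefore produces a unique $Y\in W^p_{s,\delta}(\Sigma)$. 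Since $\tilde L$ is first-order, $\tilde L Y\in W^p_{s-1,\delta+1}(\Sigma)$, and hence $\tilde A^{ij}\equiv \tilde A^{ij}_{TT}+(\tilde L Y)^{ij}\in W^p_{s-1,\delta+1}(\Sigma)$.

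Substituting this $\tilde A$ into~\eqref{Ham_GR_v2} with $K=0$ yields exactly the scalar-field Lichnerowicz problem handled by Theorem~\ref{thm_esf_lichnerowicz}: the pointwise positivity $R[\tilde D]-\tfrac12\tilde\gamma^{ij}\partial_i\phi\partial_j\phi>0$ is a hypothesis, and the required Sobolev regularity of $\tilde A$, $\tilde K_\phi$, $\phi$ and $V(\phi)$ is in place. Theorem~\ref{thm_esf_lichnerowicz} then supplies an open set $\mathcal{U}$ of admissible $(\tilde A,\phi,\tilde K_\phi)$ for which a positive solution $\psi\in W^p_{s,\delta}(1,\Sigma)$ exists.

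To transfer $\mathcal{U}$ back to the space of genuine free data $(\tilde A_{TT},\phi,\tilde K_\phi)$, I rely on continuity of the linear momentum solution map $(\phi,\tilde K_\phi)\mapsto Y$, which follows from the elliptic estimates underlying Theorem~\ref{gr_ctt_momentum}. The composition $\Phi:(\tilde A_{TT},\phi,\tilde K_\phi)\mapsto(\tilde A_{TT}+\tilde L Y(\phi,\tilde K_\phi),\phi,\tilde K_\phi)$ is then continuous in the relevant Sobolev topologies, and $\Phi^{-1}(\mathcal{U})$ is the desired open set of free data. The main effort will be the weighted-Sobolev multiplication bookkeeping; the nonlinear analytical content has already been absorbed into Theorem~\ref{thm_esf_lichnerowicz}, so I do not expect a substantive obstacle beyond the decoupling argument itself.
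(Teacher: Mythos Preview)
Your proposal is correct and follows exactly the route the paper indicates: the paper does not spell out a proof but simply writes ``Similarly, in the case of gravity minimally coupled to a scalar field, we have\ldots'', meaning one combines Theorem~\ref{gr_ctt_momentum} with Theorem~\ref{thm_esf_lichnerowicz} after observing that at $K=0$ the momentum constraint decouples because the scalar momentum density $\tilde{\mathfrak{p}}^i=2\tilde K_\phi\tilde\gamma^{ij}\partial_j\phi$ is independent of $\psi$. Your only addition is the continuity argument pulling back the open set through $\Phi$, which is a reasonable way to make the ``open set of free data'' claim precise; one very minor remark is that Theorem~\ref{gr_ctt_momentum}(i) as stated is vacuum, so strictly speaking you are using the isomorphism property of $\tilde\Delta_L$ from Lemma~\ref{conf_laplace_lemma} together with your verification that the source lies in $W^p_{s-2,\delta+2}$.
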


\subsection{Bowen-York initial data in General Relativity}\label{sec:gr_by}

The theorems presented in the previous section provide the mathematical underpinning of the construction of initial data in General Relativity (in vacuum and with a minimally coupled scalar field). It remains to find a suitable choice of free data such that the corresponding solution represents physical systems of interest. A proposal for the choice of free data that yields a slice of a spacetime with multiple black holes (in vacuum) was originally put forward by Bowen and York \cite{Bowen:1980yu}. Their approach was later modified to better suit for numerical relativity purposes, see e.g. \cite{Brandt:1997tf}. We briefly review this latter approach.

First of all, let the initial hypersurface $\Sigma$ be $\mathbb{R}^3$ with a puncture at the origin, i.e. $\Sigma=\mathbb{R}\backslash\{O\}$. Furthermore, consider the simple choice
\be
\tilde \gamma_{ij}=\delta_{ij}, \qquad K=0, \qquad \tilde A^{ij}_{TT}=0
\ee
and assume that we are in vacuum, i.e. $\tilde \varrho=0$, $\tilde{ \mathfrak{p}}^i=0$. Then as shown in \cite{Bowen:1980yu} one can obtain the following $6$-parameter family of analytical solutions to the momentum constraint
\be
Y^i=-\frac{1}{4r}\Bigr(7P^i+P^j\hat x_j \hat x^i\Bigl)-\frac{1}{r^2}\epsilon^{ijk}S_j\hat x_k
\ee
where $x^i$ are the usual Euclidean coordinates on $\mathbb{R}^3$, $r=\sqrt{\delta_{ij}x^ix^j}$ is the Euclidean distance from the origin, $\hat x^i\equiv x^i/r$ and $\epsilon^{ijk}$ is the Levi-Civita tensor associated with the flat metric. The vectors $P^i$ and $S^i$ are to be chosen freely. The corresponding conformal extrinsic curvature is
\be\label{one_puncture}
{\tilde A}_{ij}{(P,S)}=\frac{3}{2 r^2}\Bigl(P_i\hat x_j+P_j\hat x_i-(\gamma_{ij}-\hat x_i\hat x_j)P^k\hat x_k\Bigr)+\frac{3}{r^3}\left(\epsilon_{ikl}S^k {\hat x}^l {\hat x}_j+\epsilon_{jkl}S^k {\hat x}^l {\hat x}_i\right)
\ee
Interestingly, the components of the ADM linear momentum and the canonical angular momentum of this data are
\be
P_{\rm ADM}^i=\frac{1}{8\pi}\int\limits_{S_\infty}\mathrm{d}A~ (K^{ij}-K\gamma^{ij})\hat x_j=P^i
\ee
and
\be
J^i=\frac{1}{8\pi}\int\limits_{S_\infty}\mathrm{d}A~ (K_{jk}-K\gamma_{jk})\epsilon^{ijl} x_l\hat x^k=S^i,
\ee
respectively. Note that since the momentum constraint is linear in $Y$ and the solution \eqref{one_puncture} is linear in $(P,S)$, a linear superposition of any number of such solutions is also a solution to the momentum constraint. More precisely, if the initial slice is $\mathbb{R}^3$ with $N$ punctures at coordinate locations $c^i_{(\alpha)}$ then
\bea
{\tilde A}^{ij}&=\sum\limits_{\alpha=1}^N&\biggr[\frac{3}{2 r_{(\alpha)}^2}\Bigl(P^i_{(\alpha)}\hat x^j_{(\alpha)}+P^j_{(\alpha)}\hat x^i_{(\alpha)}-\left(\delta^{ij}-\hat x^i_{(\alpha)}\hat x^j_{(\alpha)}\right)\Bigr) \nonumber \\
& &+\frac{3}{r^3_{(\alpha)}}\left(\epsilon^{i}{}_{kl}S^k_{(\alpha)} {\hat x}^l_{(\alpha)} {\hat x}^j_{(\alpha)}+\epsilon^j{}_{kl}S^k_{(\alpha)} {\hat x}^l_{(\alpha)} {\hat x}^i_{(\alpha)}\right)\biggr] \label{gr_by_extr_K}
\eea
with $r_{(\alpha)}\equiv |x-c_{(\alpha)}|$ and $\hat x_{(\alpha)}\equiv \left(x-c_{(\alpha)}\right)/r_{(\alpha)}$ is an exact solution of \eqref{gr_conf_mom} (provided that $K=0$ and $\tilde p^i=0$). Clearly, in this case $P_{(\alpha)}$ and $S_{(\alpha)}$ represent the ADM linear and angular momenta of the black holes in case of large separation.

Although the momentum constraint has an exact solution, the Hamiltonian constraint is non-linear and there is no closed formula known for the corresponding solution. However, one can seek the solution in the form
\be\label{by_psi}
\psi=1+\frac{1}{\mu}+u, \qquad \qquad \frac{1}{\mu}\equiv\sum\limits_{\alpha=1}^N\frac{m_{(\alpha)}}{2|x-c_{(\alpha)}|}
\ee
where the parameters $m_{(\alpha)}$ are called the bare masses of the punctures. The significance of the ansatz \eqref{by_psi} is that $\psi$ is separated to two pieces: $1/\mu$ is singular at the punctures, whereas $u$ turns out to be regular in the {\it entire} $\mathbb{R}^3$.

To show that the solution for $u$ is indeed regular, one needs to formulate an elliptic boundary value problem for $u$ on $\mathbb{R}^3$. This amounts to rewriting the Lichnerowicz equation in terms of $u$ and complementing it with a boundary condition at $r\to \infty$. It is worth emphasizing that since this problem is solved for $u$ on $\mathbb{R}^3$ without excising the punctures, no interior boundary conditions are required for $u$. Since the flat Laplacian annihilates $1+1/\mu$ (away from the punctures), the boundary value problem to be solved for $u$ is 
\bea
\Delta u+\frac18 \mu^7 {\tilde A}_{ij}{\tilde A}^{ij}\Bigl(1+\mu (1+u)\Bigr)^{-7}&=0 \label{by_elliptic} \\
\lim\limits_{r\to\infty}\partial_r(ru)&=0
\eea
where ${\tilde A}^{ij}$ is given by \eqref{gr_by_extr_K}. The boundary condition for $u$ guarantees that $u=\mathcal{O}(r^{-1})$ as $r\to \infty$ which corresponds to the original notion of asymptotic flatness.

As discussed in \cite{Brandt:1997tf}, this elliptic boundary value problem admits a unique $C^2$ solution\footnote{In more detail, the solution is $u\in W^p_{3,\delta}(\mathbb{R}^3)$ with $p>3$ and $0\leq \delta <1-\tfrac{3}{p}$. Lemma \ref{w_sobolev_properties} then guarantees that $u$ is $C^2$.} on $\mathbb{R}^3$ (including the punctures). An important part of the proof is to observe that $\mu^7\tilde A_{ij}\tilde A^{ij}$ scales as $|x-c_{(\alpha)}|$ and therefore, in the elliptic equation \eqref{by_elliptic} the nonlinear term has a continuous prefactor. 

The solution represents a compactification of $N+1$ asymptotically flat ends where each puncture corresponds to spatial infinity. Another interesting property of the solution is that in the post-Newtonian (PN) approximation of the two-body point mass system the leading order contribution to the conjugate momentum is exactly of the Bowen-York form, provided that one uses the so-called ADMTT gauge condition \cite{Tichy:2002ec}. This means that up to third order in the velocities (i.e. of order $(v/c)^3$) the superposition of two Bowen-York extrinsic curvatures represents a slowly moving binary black hole system with large separations. Note however, that this is not true for finite $P$ and $J$: in particular, a single angular momentum source is not exactly a slice of the Kerr solution, but rather it contains some additional junk radiation. This unphysical gravitational wave content is significant when the separation of the black holes is not large enough or the momenta $P$, $J$ are not small enough. If one wishes to construct initial data for a binary system with relatively small separation, moving along quasicircular orbits, then the puncture data may not be suitable. One way to improve on the data would be to include higher order PN corrections in the free part of the data \cite{Tichy:2002ec}. A different approach to construct data that represents a binary system in quasiequilibrium will be presented later. Nevertheless, the puncture approach serves as a fairly good initial data provided that one can afford to simulate a sufficient number of orbits so that there is time for the junk radiation to disperse.

\subsection{Conformal transverse traceless decomposition for scalar-tensor effective field theory}\label{sec:4dst_ctt}

We now turn our attention to the scalar-tensor theory \eqref{4dST} and extend previous results (reviewed in sections \ref{sec:gr_ctt}-\ref{sec:gr_by}) to these theories.

We start by giving the constraint equations of the theory \eqref{4dST}. (These equations can also be found in e.g. \cite{Berti2020,Witek:2020uzz} for the theory $\epsilon_1= 0$.) The Hamiltonian constraint can be written as
\bea\label{4dst_ham}
2{\cal H}&\equiv & R[D]+\gamma^{i_1i_2}_{j_1j_2}K_{i_1}{}^{j_1}K_{i_2}{}^{j_2}-V(\phi)-\frac12 (D\phi)^2-2K_\phi^2-\epsilon_1 f_1(\phi)X\left(6K_\phi^2+\frac12 (D\phi)^2\right) \nonumber \\
& &-2\epsilon_2\gamma^{i_1i_2i_3}_{j_1j_2j_3}\left(R[D]_{i_1i_2}{}^{j_1j_2}+2K_{i_1}{}^{j_1}K_{i_2}{}^{j_2}\right)\left(D_{i_3}D^{j_3} f_2(\phi)-2 f_2^\prime(\phi)K_\phi K_{i_3}{}^{j_3}\right)=0
\eea
The momentum constraint has a particularly compact form when written in terms of the canonical momenta:
\be\label{4dst_mom_1}
{\cal M}_i\equiv D^j\left(\pi_{ij}\gamma^{-1/2}\right)-\frac12\pi_\phi \gamma^{-1/2} D_i\phi =0
\ee
where the momenta conjugate to $\phi$ and $\gamma_{ij}$ are given by
\bea\label{pi_phi}
\frac{\pi_\phi}{ \gamma^{1/2}} &= &-2K_\phi(1+2\epsilon_1 f_1(\phi) X) -2\epsilon_2 f_2^\prime (\phi)\gamma^{i_1i_2i_3}_{j_1j_2j_3}K_{i_1}{}^{j_1} \left(R[D]_{i_2i_3}{}^{j_2j_3}+\frac23 K_{i_2}{}^{j_2}K_{i_3}{}^{j_3}\right)
\eea
and
\bea\label{pi_ij}
\frac{\pi^i{}_j}{\gamma^{1/2}}&=&\gamma^{ii_1}_{jj_1}K_{i_1}{}^{j_1}+2\epsilon_2\gamma^{ii_1i_2}_{jj_1j_2}\biggl[2K_{i_1}{}^{j_1}D_{i_2}D^{j_2} f_2(\phi)- f_2^\prime(\phi)K_{\phi} \left(R[D]_{i_1i_2}{}^{j_1j_2}+2 K_{i_1}{}^{j_1}K_{i_2}{}^{j_2}\right)\biggr]
\eea
respectively. After some algebraic manipulations, the momentum constraint can also be written as
\bea\label{4dst_mom_2}
{\cal M}^i&\equiv & \gamma^{ii_1}_{jj_1}D^j K_{i_1}{}^{j_1}+K_\phi(1+2\epsilon_1 f_1(\phi) X) D^i\phi \nonumber \\
& & +\epsilon_2\gamma^{ii_1i_2}_{jj_1j_2}\Bigl\{4\left(D^jK_{i_1}{}^{j_1}\right)\left(D_{i_2}D^{j_2}f_2(\phi)-2f_2^\prime(\phi)K_\phi K_{i_2}{}^{j_2}\right) \nonumber \\
& & +\left[K_k{}^jD^k f_2(\phi)-2D^j\left( f_2^\prime(\phi)K_\phi\right)\right]\left(R[D]_{i_1i_2}{}^{j_1j_2}+2 K_{i_1}{}^{j_1}K_{i_2}{}^{j_2}\right)\Bigr\}=0
\eea

The next step is to follow the recipe of section \ref{sec:gr_ctt} and write the constraint equations \eqref{4dst_ham}-\eqref{4dst_mom_1} in terms of the variables $(\psi, \tilde \gamma_{ij}, K, \tilde A^{\rm TT}_{ij},Y^i,\phi, \tilde K_{\phi})$. Clearly, the form of the CTT constraint equations is not quite as elegant as in General Relativity: the terms coming from the $4$-derivative corrections have different conformal scalings. Nevertheless, the point is that at sufficiently weak couplings, the conformally formulated constraints will constitute a system of elliptic PDEs with a well-posed boundary value problem for the variables $(\psi,Y)$.

By weak couplings, we simply mean that for any (smooth) choice of the functions $f_1,f_2$, there exists an open set of values for $(\epsilon_1,\epsilon_2)$ in a neighborhood of $(0,0)$ such that constraint equations \eqref{4dst_grav_eom}-\eqref{4dst_scalar_eom} yield a solution. In particular, this means that the $4$-derivative corrections are small compared to the Einstein-minimally coupled scalar terms, or in other words,
\be\label{small}
   \epsilon_1 f_1(\phi), \epsilon_1 f_1'(\phi), \epsilon_2 f_2'(\phi), \epsilon_2 f_2''(\phi)\ll\Lambda^{2}
\ee
where $\Lambda$ is any length scale defined by the Riemann tensor and the first and second derivatives of the scalar field.

To discuss the puncture data for $4\partial$ST theories (later in sections \ref{sec:4dst_by} and \ref{sec:4dst_by_math}), it is useful to conformally rescale the canonical momenta. We introduce
\be
 \pi_{ij}\gamma^{-1/2}\equiv \psi^{-2} \tilde\pi_{ij}+\frac13 \pi \gamma_{ij}, \qquad \qquad  \pi_\phi \gamma^{-1/2} \equiv \psi^{-6} \tilde \pi_\phi 
\ee
The momentum constraint now takes the same form as in general relativity (c.f. \eqref{mom_GR_v2}), after making the substitutions $\tilde A_{ij}\to -\tilde\pi_{ij} $, $K\to \frac12 \pi$ and $\tilde{ \mathfrak{p}}^i\to -\tilde \pi_\phi \tilde D^i\phi$:
\be
-\tilde D_j \tilde \pi^{ij}-\frac13\psi^6\tilde \gamma^{ij}\tilde D_j \pi+\frac12 \tilde \pi_\phi \tilde \gamma^{ij}\tilde D_j\phi=0
\ee
As a reference, we give the Lichnerowicz equation (Hamiltonian constraint in terms of the conformal variables) for $4\partial$ST theories:
\bea\label{4dst_ham_ctt}
0&= & \tilde \gamma^{kl}\tilde D_k\tilde D_l \psi-\frac18 \psi \left(R[\tilde  D]-\frac12 \tilde \gamma^{ij}\partial_i\phi\partial_j\phi\right)-\psi^5 \left(\frac{1}{12}K^2-\frac18 V(\phi)\right)+\frac18\psi^{-7}\left(\tilde A_{kl}\tilde A^{kl}+2\tilde K_\phi^2\right) \nonumber \\
& &+\frac18\epsilon_1 f_1(\phi)\psi^5\left(12\psi^{-24}\tilde K_\phi^4-2\psi^{-16}\tilde K_\phi^2 \tilde\gamma^{ij}\partial_i\phi\partial_j\phi-\frac14 \psi^{-8}(\tilde\gamma^{ij}\partial_i\phi\partial_j\phi)^2\right) \nonumber \\
& & +\frac14 \epsilon_2 \psi^{-3} \biggl[\tilde D_{i_3}\tilde D^{j_3} f_2(\phi)-2\tilde D_{i_3}\ln\psi\tilde D^{j_3} f_2(\phi)-2\tilde D_{i_3} f_2(\phi)\tilde D^{j_3}\ln\psi \nonumber \\
& &+2\tilde\gamma^{j_3}_{i_3}\tilde \gamma^{kl}\tilde D_k\ln\psi\tilde D_l  f_2(\phi)-2\psi^{-8} f_2^\prime(\phi)\tilde K_\phi \tilde A_{i_3}{}^{j_3}-\frac23\psi^{-2}  f_2^\prime(\phi)\tilde K_\phi K \tilde\gamma_{i_3}^{j_3}\biggr] \nonumber \\
& & \times \biggl(\gamma^{i_1i_2i_3}_{j_1j_2j_3} R[\tilde D]_{i_1i_2}{}^{j_1j_2}+8\psi^{-1}\tilde D^{i_3}\tilde D_{j_3}\psi-8\psi^{-1}\tilde\gamma^{i_3}_{j_3}\tilde D_{k}\tilde D^{k}\psi-24\psi^{-2}\tilde D^{i_3}\psi\tilde D_{j_3}\psi \nonumber \\
& & -16\psi^{-2}\tilde\gamma^{i_3}_{j_3}\tilde D_{k}\psi\tilde D^{k}\psi + \frac49\psi^4 K^2\tilde\gamma^{i_3}_{j_3}-\frac43 \psi^{-2} K \tilde A^{i_3}_{j_3}+4\psi^{-8}\tilde A^{i_3}{}_k\tilde A^k{}_{j_3}-2\psi^{-8}\tilde A^{k}{}_l\tilde A^l{}_{k} \tilde\gamma^{i_3}_{j_3}\biggr)
\eea
Similarly, one can write the momentum constraint using the CTT variables
\bea\label{4dst_mom_ctt}
0&= &\tilde \Delta_L Y^i-\frac23 \psi^{6}\tilde \gamma^{ij}\tilde D_j K-\tilde K_\phi\left(1+2\epsilon_1 f_1(\phi)X \right) \tilde \gamma^{ij}\tilde D_j\phi \nonumber \\
& & +4\epsilon_2\psi^{-4}\gamma^{ii_1i_2}_{jj_1j_2}\biggl[\tilde D_{i_2}\tilde D^{j_2} f_2(\phi)-2\tilde D_{i_2}\ln\psi\tilde D^{j_2} f_2(\phi)-2\tilde D_{i_2} f_2(\phi)\tilde D^{j_2}\ln\psi \nonumber \\
& &+2\tilde\gamma^{j_2}_{i_2}\tilde \gamma^{kl}\tilde D_k\ln\psi\tilde D_l  f_2(\phi)-2\psi^{-8} f_2^\prime(\phi)\tilde K_\phi \tilde A_{i_2}{}^{j_2}-\frac23\psi^{-2}  f_2^\prime(\phi)\tilde K_\phi K \delta_{i_2}^{j_2}\biggr] \nonumber \\
& &\times \left(\tilde D^j {\tilde A}_{i_1}{}^{j_1}+\frac13 \tilde\gamma_{i_1}^{j_1} \psi^{6}\tilde D^j K-4{\tilde A}_{i_1}^{j_1}{\tilde D}^j\ln\psi-2\tilde \gamma_{i_1}^{j_1}{\tilde A}^{jl}\tilde D_l\ln\psi\right) \nonumber \\
& & +\epsilon_2 \psi^{-4}\left[{\tilde A}_k{}^j{\tilde D}^k f_2(\phi)+\frac13 \psi^6 K{\tilde D}^j  f_2(\phi)-\psi^6{\tilde D}^j\left( f_2^\prime(\phi)\psi^{-6}{\tilde K}_\phi\right)\right] \nonumber \\
& & \times \biggl(\gamma^{ii_1i_2}_{jj_1j_2} R[\tilde D]_{i_1i_2}{}^{j_1j_2}+8\psi^{-1}\tilde D_{i}\tilde D^{j}\psi-8\psi^{-1}\tilde\gamma^{i}_{j}\tilde D_{k}\tilde D^{k}\psi-24\psi^{-2}\tilde D_{i}\psi\tilde D^{j}\psi \nonumber \\
& & -16\psi^{-2}\tilde\gamma^{i}_{j}\tilde D_{k}\psi\tilde D^{k}\psi + \frac{4}{9}\psi^4 K^2\tilde \gamma^{i}_{j}-\frac43 \psi^{-2} K \tilde A^{i}_{j}+4\psi^{-8}\tilde A^{i k}\tilde A_{kj}-2\psi^{-8}\tilde A^{kl}\tilde A_{kl} \tilde\gamma^{i}_{j}\biggr)
\eea
where one has to convert $\tilde A^{ij}$ to the variables $\tilde A_{\rm TT}$, $Y^i$ using \eqref{ctt_A}-\eqref{conf_KVF}. As explained in the subsequent sections, despite the length and lack of elegance of the equations, it is quite simple to extend the existence and uniqueness theorems of section \ref{sec:gr_ctt_math} to equations \eqref{4dst_ham_ctt}-\eqref{4dst_mom_ctt}, at least for small couplings. In particular, numerical relativists should not be discouraged after seeing these equations: at weak coupling they can be solved iteratively, starting from a GR solution.

\subsection{Existence and uniqueness in scalar-tensor effective field theories}\label{sec:4dst_ctt_math}

In this section we establish a well-posedness result for the elliptic boundary value problem of the CTT formulation of the $4\partial$ST constraints. We start by recalling the implicit function theorem in Banach spaces (see e.g. \cite{Nirenberg:2264150}).

\begin{theorem}[Implicit function theorem in Banach spaces]\label{thm:imp}
Let ${\cal B}_x$, ${\cal B}_y$ and ${\cal B}_z$ be Banach spaces and let ${\cal F}$ be a $C^1$ (in the sense of Fr{\'e}chet derivatives) map ${\cal B}_x\times {\cal B}_y\to {\cal B}_z$ such that ${\cal F}(x_0,y_0)=0$ for some $(x_0,y_0)\in {\cal B}_x\times {\cal B}_y$ and the Fr{\'e}chet derivative $y\mapsto{\cal D}_y{\cal F}|_{(x_0,y_0)}(0,y)$ is an isomorphism ${\cal B}_y\to {\cal B}_z$. Then there exists a neighbourhood $\Omega\subset {\cal B}_x$ of $x_0$ and a unique $C^1$ map ${\cal G}: \Omega \to {\cal B}_y$ such that ${\cal G}(x_0)=y_0$ and ${\cal F}(x,{\cal G}(x))=0$ for every $x\in \Omega$. 
\end{theorem}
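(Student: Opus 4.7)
The plan is to prove this via the Banach fixed-point theorem. By translating the origin, I would assume without loss of generality that $(x_0,y_0)=(0,0)$, and set $L := {\cal D}_y {\cal F}|_{(0,0)}(0,\cdot):{\cal B}_y\to {\cal B}_z$. By hypothesis $L$ is a Banach-space isomorphism, and the bounded inverse theorem guarantees that $L^{-1}$ is a bounded linear map. The trick is to convert the problem to a fixed-point equation: define
\begin{equation}
\Phi(x,y) \;:=\; y - L^{-1}\,{\cal F}(x,y),
\end{equation}
so that ${\cal F}(x,y)=0$ if and only if $\Phi(x,y)=y$.

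The first step is to verify that $\Phi(x,\cdot)$ is a contraction on a small closed ball in ${\cal B}_y$ when $x$ is small. The key identity is ${\cal D}_y\Phi|_{(0,0)} = I - L^{-1} L = 0$. Since ${\cal F}$ is $C^1$ in the Fr{\'e}chet sense, ${\cal D}_y\Phi$ is continuous, so there exist $\rho,r>0$ with $\|{\cal D}_y\Phi(x,y)\|_{\rm op}\leq \tfrac12$ whenever $\|x\|\leq \rho$ and $\|y\|\leq r$. The mean value inequality in Banach spaces then gives
\begin{equation}
\|\Phi(x,y_1)-\Phi(x,y_2)\|\leq \tfrac12\|y_1-y_2\|
\end{equation}
on this region. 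Shrinking $\rho$ if necessary, continuity in $x$ and $\Phi(0,0)=0$ ensure $\|\Phi(x,0)\|\leq r/2$, so $\Phi(x,\cdot)$ maps $\overline{B(0,r)}\subset {\cal B}_y$ into itself. The Banach fixed-point theorem then supplies a unique $y={\cal G}(x)\in \overline{B(0,r)}$ with ${\cal F}(x,{\cal G}(x))=0$, and the standard contraction argument (using that $\Phi$ depends continuously on the parameter $x$) yields continuity of ${\cal G}$ on $\Omega := B(0,\rho)$.

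The harder part is upgrading continuity of ${\cal G}$ to $C^1$ regularity, which is where I would concentrate my effort. The set of invertible bounded linear operators ${\cal B}_y\to {\cal B}_z$ is open in the operator norm, so after possibly shrinking $\Omega$ further, the Fr{\'e}chet derivative ${\cal D}_y{\cal F}(x,{\cal G}(x))$ remains an isomorphism for all $x\in\Omega$. I would then show directly from the definition that ${\cal G}$ is Fr{\'e}chet-differentiable at each $x\in\Omega$ with derivative
\begin{equation}
{\cal D}{\cal G}(x) \;=\; -\,\bigl[{\cal D}_y{\cal F}(x,{\cal G}(x))\bigr]^{-1}\circ {\cal D}_x{\cal F}(x,{\cal G}(x)),
\end{equation}
by expanding ${\cal F}(x+h,{\cal G}(x+h)) - {\cal F}(x,{\cal G}(x)) = 0$ to first order and using the Lipschitz estimate on ${\cal G}$ obtained from the contraction. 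Continuity of the right-hand side in $x$ follows from continuity of ${\cal G}$, continuity of the partial derivatives of ${\cal F}$, and continuity of operator inversion, establishing ${\cal G}\in C^1(\Omega,{\cal B}_y)$. Uniqueness of a $C^1$ map ${\cal G}$ agreeing with $y_0$ at $x_0$ and solving ${\cal F}(x,{\cal G}(x))=0$ follows from the uniqueness of the fixed point within the chosen ball.

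The main obstacle is the $C^1$ regularity step: existence and uniqueness of ${\cal G}$ as a continuous map are essentially an exercise in contraction mappings, but differentiability requires careful bookkeeping of the remainder in the first-order Taylor expansion of ${\cal F}$ together with the a priori Lipschitz control of ${\cal G}$ inherited from the contraction estimate. Everything else is conceptually routine once the fixed-point reformulation is in place.
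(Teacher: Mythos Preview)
Your proposal is correct and follows essentially the same route as the paper's sketch: the paper rewrites ${\cal F}(x,y)=0$ as the fixed-point equation $y={\cal A}^{-1}{\cal R}(x,y)$ with ${\cal R}(x,y)\equiv {\cal A}y-{\cal F}(x,y)$, which is precisely your map $\Phi(x,y)=y-L^{-1}{\cal F}(x,y)$, and then invokes the Banach fixed-point theorem. The paper only reviews the main idea and does not address the $C^1$ regularity of ${\cal G}$ at all, so your outline is in fact more detailed than what the paper provides.
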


It is perhaps worth going on a small detour and reviewing the main idea of the proof of this theorem. Let us denote the Fr{\'e}chet derivative (i.e. "linearization") of the functional $\cal F$ at $(x_0,y_0)$ by ${\cal A}$. Note that $\cal A$ is a linear operator ${\cal B}_y\to {\cal B}_z$ and an isomorphism by assumption. The equation ${\cal F}(x,y)=0$ can be rewritten as
\be
{\cal A}y={\cal R}(x,y) \qquad {\rm or} \qquad y={\cal A}^{-1}{\cal R}(x,y)
\ee
with
\be
{\cal R}(x,y)\equiv {\cal A}y-{\cal F}(x,y).
\ee
It can be shown that there is an open ball in ${\cal B}_x$ centered around $x_0$ such that for a fixed $x$ the map ${\cal A}^{-1}{\cal R}(x,\cdot):{\cal B}_y\to{\cal B}_y$ is a contraction map. The theorem then follows by the Banach fixed point theorem, i.e. there is a unique solution $y$ to the equation $y={\cal A}^{-1}{\cal R}(x,y)$ near $y_0$. The practical importance of the fact that ${\cal A}^{-1}{\cal R}(x,y)$ is a contraction map is that the equation can be solved by iterations.

To state and prove a theorem on the $4\partial$ST constraints, we need to make use of a few standard results on elliptic operators in asymptotically Euclidean manifolds \cite{Choquet2000}. Let $(\Sigma,\tilde \gamma)$ be an $n$-dimensional asymptotically Euclidean manifold of class $W^p_{s,\delta}$ with 
\be\label{hyp_delta_p}
n-2-\frac{n}{p}>\delta>-\frac{n}{p}, \qquad \qquad p>\frac{n}{2}.
\ee

To apply the implicit function theorem to the non-linear $4\partial$ST constraints, one needs to study the linearization of these equations. The first lemma will be used for the discussion of the linearized Lichnerowicz equation.

\begin{lemma}\label{poisson_lemma}
Consider a $W^p_{s-2,\delta+2}$ scalar function $c$ on $\Sigma$ and define the Poisson operator 
$$L_P: u\mapsto \tilde \gamma^{kl}\tilde D_k\tilde D_lu-cu$$
for a $W^p_{s,\delta}$ scalar function $u$ on $\Sigma$. Assume furthermore that \eqref{hyp_delta_p} and one of the following two hypotheses holds
\begin{itemize}
    \item[1)] $c\geq 0$ and $s>2+\tfrac{n}{p}$
    \item[2)] $s\geq 2$ and for any nontrivial $f\in W^p_{s,\delta}(\Sigma)$
    \be
    \int\limits_\Sigma d^n\sqrt{\tilde \gamma}\left(\tilde \gamma^{ij}\partial_i f\partial_j f+cf^2\right)>0
    \ee
\end{itemize}
Then the operator $L_P$ is an isomorphism $W^p_{s,\delta}(\Sigma)\to W^{p}_{s-2,\delta+2}(\Sigma)$, i.e. the unique $W^p_{s,\delta}$ solution of the equation $L_Pu=0$ is $u\equiv 0$. 
\end{lemma}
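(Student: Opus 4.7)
The plan is to prove the isomorphism property by combining three standard ingredients: first, the known isomorphism property of the pure Laplace--Beltrami operator on weighted Sobolev spaces of asymptotically Euclidean manifolds; second, a Fredholm alternative argument showing that $L_P$ is a compact perturbation of that isomorphism; and third, an injectivity argument, carried out differently under hypotheses (1) and (2).

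The first step is to verify that $L_P$ is a bounded linear map $W^p_{s,\delta}(\Sigma)\to W^p_{s-2,\delta+2}(\Sigma)$. The principal part $\tilde\gamma^{kl}\tilde D_k\tilde D_l$ lands in the target space by the definition of the weighted Sobolev norm together with Lemma \ref{w_sobolev_properties}(ii) applied to the components of $\tilde\gamma$, which lie in $W^p_{s,\delta}(1,\Sigma)$. For the zeroth-order term $cu$, Lemma \ref{w_sobolev_properties}(ii) with the indices $(s_1,\delta_1)=(s-2,\delta+2)$ and $(s_2,\delta_2)=(s,\delta)$ produces a map into $W^p_{s-2,\delta+2}$ under the hypotheses \eqref{hyp_delta_p}.

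Next, by classical results for elliptic operators on asymptotically Euclidean manifolds (Cantor, Choquet--Bruhat--Christodoulou, McOwen), the operator $\Delta_{\tilde\gamma}:W^p_{s,\delta}(\Sigma)\to W^p_{s-2,\delta+2}(\Sigma)$ is an isomorphism precisely in the range of $(\delta,p)$ given in \eqref{hyp_delta_p}. A weighted Rellich-type compact embedding (with a small sacrifice of weight or regularity, absorbed by the margin in Lemma \ref{w_sobolev_properties}(ii)) shows that multiplication by $c$ is compact from $W^p_{s,\delta}$ into $W^p_{s-2,\delta+2}$. Hence $L_P=\Delta_{\tilde\gamma}-c$ is a compact perturbation of an isomorphism between reflexive Banach spaces and is therefore Fredholm of index zero; it suffices to prove that its kernel is trivial, as surjectivity then follows automatically.

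Under hypothesis (1), the condition $s>2+n/p$ and the Sobolev embedding in Lemma \ref{w_sobolev_properties}(i) imply that any $u$ with $L_Pu=0$ is at least $C^2(\Sigma)$, and $\delta>-n/p$ forces $u\to 0$ at infinity. If $u\not\equiv 0$ then it attains an interior positive maximum or negative minimum; at such an interior extremum $p_0$ with, say, $u(p_0)>0$ one has $\Delta u(p_0)\leq 0$ while $c(p_0)u(p_0)\geq 0$, so the equation gives $cu=0$ and $\Delta u=0$ locally, and the strong maximum principle forces $u$ to be constant, hence zero by decay. Under hypothesis (2), I would instead multiply $L_Pu=0$ by $u$, integrate over $\Sigma$ and integrate by parts to obtain
\[
\int_\Sigma\bigl(\tilde\gamma^{ij}\partial_i u\,\partial_j u+c u^2\bigr)\sqrt{\tilde\gamma}\,d^n x=0,
\]
whereupon the positivity hypothesis immediately yields $u\equiv 0$. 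The main technical obstacles are (a) establishing the weighted compactness of multiplication by $c$, which controls exactly where the margin in the exponents of Lemma \ref{w_sobolev_properties}(ii) is spent, and (b) justifying the integration by parts in the weighted setting for hypothesis (2), which requires exhausting $\Sigma$ by compact domains and using the decay rates of $u$ and $\nabla u$ built into the weighted Sobolev norms to discard the boundary terms in the limit.
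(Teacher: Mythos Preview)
The paper does not actually prove this lemma; it is quoted as a standard result on elliptic operators on asymptotically Euclidean manifolds, with a citation to \cite{Choquet2000} (and implicitly to \cite{Choquet-Bruhat:2009xil}). Your sketch is essentially the argument one finds in those references: Fredholm index zero via a compact perturbation of the Laplacian isomorphism in the range \eqref{hyp_delta_p}, followed by injectivity via the maximum principle under hypothesis (1) or the energy identity under hypothesis (2). The two technical points you flag --- compactness of multiplication by $c$ in the weighted setting (which indeed uses the strict inequalities in Lemma~\ref{w_sobolev_properties}(ii) together with a weighted Rellich embedding), and the vanishing of the boundary term in the integration by parts --- are precisely the places where the cited sources do the work.

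One small imprecision worth noting: in your treatment of hypothesis (1), the claim that ``$cu=0$ and $\Delta u=0$ locally'' is only justified at the extremal point itself, not in a neighbourhood. This is harmless, since the strong maximum principle for operators of the form $\Delta - c$ with $c\geq 0$ applies directly: a $C^2$ solution attaining a nonnegative interior maximum is constant, and the decay at infinity then forces $u\equiv 0$.
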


A similar result will be required for the linearized momentum constraint which comes down to a statement on the conformal vector Laplacian \eqref{conf_vector_laplace} \cite{Choquet-Bruhat:2009xil}.

\begin{lemma}\label{conf_laplace_lemma}
On our $n$-dimensional asymptotically Euclidean manifold $(\Sigma,\tilde \gamma)$ of class $W^p_{s,\rho}$, let us assume the inequalities \eqref{hyp_delta_p}.
\begin{itemize}
    \item[(1)] The kernel of the conformal Laplace operator $\Delta_L$ is the space of conformal Killing vector fields of $(\Sigma,\tilde\gamma)$.
    \item[(2)] The manifold $(\Sigma,\tilde \gamma)$ has no conformal Killing vector fields of class $W^p_{s,\delta}$ under the above assumptions. Therefore, $\Delta_L$ is an isomorphism $W^p_{s,\delta}(\Sigma)\to W^p_{s-2,\delta+2}(\Sigma)$.
\end{itemize}
\end{lemma}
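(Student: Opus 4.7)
The plan is to handle the two parts separately and then combine them via a Fredholm argument. For part (1), the identification of $\ker \tilde\Delta_L$ with conformal Killing fields, I would proceed by an integration-by-parts identity. Suppose $Y\in W^p_{s,\delta}(\Sigma)$ satisfies $\tilde\Delta_L Y^i = 0$. Multiply by $Y_i$, integrate over $\Sigma$, and use the product rule to obtain
\be
0 \;=\; \int_\Sigma Y_i\,\tilde D_j (\tilde L Y)^{ij}\,\sqrt{\tilde\gamma}\,d^n x
\;=\; -\int_\Sigma (\tilde D_j Y_i)(\tilde L Y)^{ij}\,\sqrt{\tilde\gamma}\,d^n x + B_\infty,
\ee
where $B_\infty$ is the boundary term at infinity. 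Because $(\tilde L Y)^{ij}$ is symmetric and trace-free, the first integrand equals $\tfrac12 (\tilde L Y)_{ij}(\tilde L Y)^{ij}$. The asymptotic decay dictated by $\delta>-n/p$ together with Lemma \ref{w_sobolev_properties}(i) guarantees that $Y$ and $\tilde L Y$ decay fast enough to kill $B_\infty$ (this is where the lower bound on $\delta$ is essential). Thus $\tilde L Y\equiv 0$, i.e.\ $Y$ is a conformal Killing field. The converse is immediate from the definition $\tilde\Delta_L Y=\tilde D_j(\tilde L Y)^{ij}$.

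For part (2), the task is to rule out nontrivial conformal Killing fields in $W^p_{s,\delta}(\Sigma)$. The strategy is to work in an asymptotic end $E_I$ where $\tilde\gamma_{ij}-\delta_{ij}\in W^p_{s,\delta}$ and compare with the flat case. On flat $\mathbb R^n$, the conformal Killing equation $\partial_i Y_j+\partial_j Y_i - \tfrac{2}{n}\delta_{ij}\partial_k Y^k = 0$ has the explicit 10-parameter family of solutions (translations, rotations, dilations, special conformal transformations), none of which decay at infinity except $Y\equiv 0$. By elliptic regularity for $\tilde L$ on the end (the conformal Killing operator has injective principal symbol, so its composition with its formal adjoint is elliptic), a $W^p_{s,\delta}$ solution of $\tilde L Y=0$ has an asymptotic expansion whose leading piece must solve the flat conformal Killing equation. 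The decay condition $\delta>-n/p$ forces every admissible flat limit to vanish, and then a unique-continuation argument (the operator $\tilde L^{\dagger}\tilde L$ satisfies a Carleman-type estimate on a connected manifold) propagates $Y\equiv 0$ from the end to all of $\Sigma$.

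Having shown injectivity, the isomorphism claim follows from standard Fredholm theory for elliptic operators on asymptotically Euclidean manifolds in weighted Sobolev spaces. The operator $\tilde\Delta_L$ is a second-order elliptic operator whose asymptotic model is the flat conformal vector Laplacian on $\mathbb R^n$. By the results quoted in \cite{Choquet-Bruhat:2009xil,Choquet2000}, for every weight satisfying \eqref{hyp_delta_p} the map
\be
\tilde\Delta_L : W^p_{s,\delta}(\Sigma)\to W^p_{s-2,\delta+2}(\Sigma)
\ee
is Fredholm and its index equals the index of the flat model, which is zero because the restrictions $-n/p<\delta$ and $\delta<n-2-n/p$ are symmetric with respect to the $L^2$-duality weight $\delta_\star=(n-2)/2-n/p$ that governs the formal self-adjointness of $\tilde\Delta_L$. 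Trivial kernel combined with zero index then yields surjectivity, and the bounded inverse theorem promotes the algebraic inverse to a topological isomorphism.

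The main obstacle, and the step I expect to take real care over, is the asymptotic/unique-continuation argument in part (2): verifying that the decay permitted by $\delta>-n/p$ is genuinely incompatible with every nonzero flat conformal Killing field and that this forces the global solution to vanish. The boundary-term estimate in part (1) and the index computation for the Fredholm step are more routine, relying on the weighted Sobolev embedding of Lemma \ref{w_sobolev_properties}(i) and on standard asymptotic analysis that pairs the weight $\delta$ with its dual weight across the $L^2$-duality.
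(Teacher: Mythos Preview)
The paper does not prove this lemma at all; it is simply quoted from \cite{Choquet-Bruhat:2009xil} as a standard input. So there is no ``paper's proof'' to compare against, and your sketch is being judged on its own merits.

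Your overall architecture --- integration by parts for part~(1), nonexistence of decaying flat conformal Killing fields plus propagation for part~(2), and Fredholm index zero for the isomorphism --- is the standard route and is sound in outline. There is, however, a genuine gap in your part~(1). You claim that the lower bound $\delta>-n/p$ is what kills the boundary term $B_\infty$. It does not. With $Y\in W^p_{s,\delta}$ one has (after Sobolev embedding) $|Y|\lesssim r^{-\delta-n/p}$ and $|\tilde L Y|\lesssim r^{-\delta-1-n/p}$, so the surface integral over $S_R$ scales as $R^{\,n-2-2(\delta+n/p)}$, which vanishes only when $\delta>\tfrac{n-2}{2}-\tfrac{n}{p}$. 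The hypothesis \eqref{hyp_delta_p} allows $\delta$ anywhere in $(-n/p,\,n-2-n/p)$, and for $\delta$ in the lower half of that interval your boundary term need not go to zero. The fix is standard but you must say it: since $\tilde\Delta_L Y=0$ has zero right-hand side, weighted elliptic regularity lets you bootstrap the decay of $Y$ up to any weight strictly below $n-2-n/p$ (the first indicial root), after which the integration by parts is legitimate. Without this bootstrapping step the argument as written fails.

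A smaller point on part~(2): invoking Carleman estimates is heavier machinery than needed. The conformal Killing system is overdetermined and of finite type --- a solution is determined by its $2$-jet at any point --- so once you have shown $Y$ vanishes to high order at infinity (or on an open set of an end), vanishing propagates to all of $\Sigma$ by connectedness without any unique-continuation analysis for elliptic operators. This is how the references you cite actually argue it.
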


Of course, the initial data slice $(\Sigma,\tilde\gamma)$ can have e.g. smooth conformal Killing vector fields, it is simply growth at infinity that rules out the possibility of a conformal Killing vector field that is weighted Sobolev regular with the above requirements.

Finally, we are in the position to combine all these results to a theorem on the conformally formulated $4\partial$ST constraints.

\begin{theorem}\label{thm:4dst_ctt}
Let $(\Sigma,\tilde \gamma)$ be an $3$-dimensional asymptotically Euclidean manifold of type $W^p_{s,\delta}$ with
$$p>\frac{3}{2},\qquad s>2+\frac{3}{p} \qquad {\rm and} \qquad \frac{1}{2}-\frac{3}{p}<\delta<1-\frac{3}{p}.$$
Assume that we are given a solution of the CTT Einstein-scalar-field constraint equations $(\psi_0,Y_0)$ with $\psi_0>0$ and $\psi_0-1,Y_0 \in W^p_{s,\delta}(\Sigma)$ corresponding to free data $(\tilde \gamma,K,\tilde A_{TT},\phi,\tilde K_\phi)$. Our hypothesis on the free data is $\phi-\phi_\infty \in W^p_{s,\delta}(\Sigma)$ where $\phi_\infty$ is the asymptotic (constant) value of the scalar field, $V(\phi)\in W^p_{s-2,\delta+2}(\Sigma)$, $\tilde A_{TT}, \tilde K_\phi \in W^p_{s-1,\delta+1}(\Sigma)$ and $K=0$.

Then the $4\partial$ST constraint system \eqref{4dst_ham_ctt}-\eqref{4dst_mom_ctt} admits a unique solution $(\psi,Y)$ for sufficiently small values of $\epsilon_1$ and $\epsilon_2$ with the same free data, provided that
\begin{itemize}
    \item[(i)] $f_1(\phi)-f_1(\phi_\infty), f_2(\phi)-f_2(\phi_\infty)\in W^p_{s,\delta}(\Sigma)$ and
    \item[(ii)] for every non-trivial function $f\in W^p_{2,\rho}(\Sigma)$ with $\rho>\frac12-\tfrac{3}{p}$ the following inequality holds:
    \be\label{pos_func_scalar}
    \int\limits_\Sigma d^n x\sqrt{\tilde\gamma} \left[\tilde \gamma^{ij}\partial_i f\partial_j f+cf^2\right]>0
    \ee
    with
    \be
    c\equiv \frac18\left( R[\tilde D]+7\psi_0^{-8}\tilde A_{kl}\tilde A^{kl}+14 \psi_0^{-6}\tilde  K_\phi^2-\frac12 \tilde \gamma^{ij}\partial_i\phi\partial_j\phi-5\psi_0^4 V(\phi)\right) \nonumber
    \ee
\end{itemize} 
The $4\partial$ST solution $(\psi,Y)$ is near the Einstein-scalar-field solution in the sense of $W^p_{s,\delta}$ norms.
\end{theorem}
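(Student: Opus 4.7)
The plan is to apply the implicit function theorem in Banach spaces (Theorem \ref{thm:imp}) with the couplings $(\epsilon_1,\epsilon_2)\in\mathbb{R}^2$ playing the role of the parameter $x$, the pair $(\psi-1,Y)\in W^p_{s,\delta}(\Sigma)\times W^p_{s,\delta}(\Sigma)$ playing the role of the unknown $y$, and $\cal F$ defined as the left-hand sides of \eqref{4dst_ham_ctt}--\eqref{4dst_mom_ctt}, targeting $W^p_{s-2,\delta+2}(\Sigma)\times W^p_{s-2,\delta+2}(\Sigma)$. The base point is $(\epsilon_1,\epsilon_2)=(0,0)$ together with $(\psi_0,Y_0)$, at which $\cal F$ vanishes by the hypothesis that $(\psi_0,Y_0)$ solves the Einstein-scalar-field constraints with the given free data.

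First I would verify that $\cal F$ is a well-defined $C^1$ map between these Banach spaces. Dependence on $(\epsilon_1,\epsilon_2)$ is polynomial and therefore smooth; the substantive content lies in the regularity in $(\psi,Y)$. Under the index inequalities $p>3/2$, $s>2+3/p$, $1/2-3/p<\delta<1-3/p$ one has the embedding into $C^2$ provided by Lemma \ref{w_sobolev_properties}(i), and the multiplication properties of Lemma \ref{w_sobolev_properties}(ii), together with hypothesis (i) of the theorem (so that $f_1(\phi)-f_1(\phi_\infty),\, f_2(\phi)-f_2(\phi_\infty)\in W^p_{s,\delta}(\Sigma)$) and the assumed regularities of $\tilde\gamma,\phi,\tilde A_{TT},\tilde K_\phi,V(\phi)$, allow one to check term-by-term that every contraction appearing in \eqref{4dst_ham_ctt}--\eqref{4dst_mom_ctt}, including the second covariant derivatives $\tilde D_i\tilde D^j f_2(\phi)$ and curvature contractions, lands in $W^p_{s-2,\delta+2}(\Sigma)$.

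Next I compute the Fr{\'e}chet derivative ${\cal D}_{(\psi,Y)}{\cal F}$ at $(0,0,\psi_0,Y_0)$. With the couplings switched off the equations collapse to the GR CTT system linearized about $(\psi_0,Y_0)$. Because $K\equiv 0$, the linearized momentum constraint is independent of $\delta\psi$ and reduces to $\tilde\Delta_L\delta Y$, while the linearized Lichnerowicz equation takes the form $L_P\delta\psi+{\cal N}\delta Y$ where $L_P u=\tilde\gamma^{kl}\tilde D_k\tilde D_l u - c u$ with $c$ exactly the function appearing in hypothesis (ii), and ${\cal N}$ a bounded operator produced by the $\tilde A_{kl}\tilde A^{kl}$ terms after substituting $\tilde A=\tilde A_{TT}+\tilde L Y$. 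The derivative is therefore block upper-triangular. By Lemma \ref{conf_laplace_lemma}, $\tilde\Delta_L:W^p_{s,\delta}(\Sigma)\to W^p_{s-2,\delta+2}(\Sigma)$ is an isomorphism, and hypothesis (ii) is precisely the coercivity condition needed to invoke Lemma \ref{poisson_lemma}(2), making $L_P$ an isomorphism between the same spaces; the triangular structure then makes ${\cal D}_{(\psi,Y)}{\cal F}$ an isomorphism.

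Theorem \ref{thm:imp} then produces, for all sufficiently small $(\epsilon_1,\epsilon_2)$, a unique $C^1$ family of solutions $(\psi,Y)$ with $\psi-1,Y\in W^p_{s,\delta}(\Sigma)$ lying close in $W^p_{s,\delta}$ norm to $(\psi_0,Y_0)$; the strict positivity $\psi>0$ then follows for small enough couplings from $\psi_0>0$ and the embedding $W^p_{s,\delta}(\Sigma)\hookrightarrow C^0$ granted by Lemma \ref{w_sobolev_properties}(i). Conceptually the proof is a direct repackaging of the GR results via an iteration in the couplings, which matches the contraction-mapping description of Theorem \ref{thm:imp} recalled in the paper. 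The genuinely laborious step, and the main technical obstacle, is the bookkeeping required in the regularity check --- verifying the weighted-Sobolev mapping properties of every one of the many individual terms in the lengthy equations \eqref{4dst_ham_ctt}--\eqref{4dst_mom_ctt}, especially the Gauss-Bonnet-induced pieces that contain $\tilde D^2 f_2(\phi)$ multiplied into the Ricci tensor of $\tilde\gamma$.
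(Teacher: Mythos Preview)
Your proposal is correct and follows essentially the same route as the paper: apply the implicit function theorem (Theorem \ref{thm:imp}), verify via Lemma \ref{w_sobolev_properties} that the constraint map lands in $W^p_{s-2,\delta+2}\times W^p_{s-2,\delta+2}$, observe that the linearisation at $\epsilon_1=\epsilon_2=0$ is block upper-triangular with $\tilde\Delta_L$ and the Poisson operator $\tilde\gamma^{kl}\tilde D_k\tilde D_l-c$ on the diagonal, and invoke Lemmas \ref{poisson_lemma} and \ref{conf_laplace_lemma} for their invertibility. The only cosmetic difference is that the paper takes the parameter $x$ to be the pair of functions $(\epsilon_1 f_1(\phi),\epsilon_2 f_2(\phi))$ in a weighted Sobolev space rather than the scalar couplings $(\epsilon_1,\epsilon_2)\in\mathbb{R}^2$; your choice is equally valid and arguably more direct, and your explicit remark on preserving $\psi>0$ via the $C^0$ embedding is a point the paper leaves implicit.
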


\begin{proof}
The result follows by a straightforward application of the implicit function theorem. Take
\be
x\equiv \bigl(\epsilon_1 f_1(\phi),\epsilon_2 f_2(\phi)\bigr), \qquad y\equiv (\psi, Y), \qquad z\equiv \bigl({\cal H}(x,y),{\cal M}(x,y)\bigr)
\ee
with
\be
{\cal B}_x\equiv W^p_{s,\delta}(\Sigma)\times W^p_{s,\delta}(\Sigma), \qquad {\cal B}_y=W^p_{s,\delta}(\Sigma)\times W^p_{s,\delta}(\Sigma)
\ee

The next step is to use the Sobolev multiplication properties (Lemma \ref{w_sobolev_properties}) together with the assumptions on $s$ and $\delta$ to check that the $4\partial$ST Hamiltonian and momentum constraints \eqref{4dst_ham_ctt}-\eqref{4dst_mom_ctt} map into
\be
{\cal B}_z\equiv W^p_{s-2,\delta+2}(\Sigma)\times W^p_{s-2,\delta+2}(\Sigma)
\ee
To see this, we first note that the $m^{\rm th}$ derivatives of a tensor field of class $W^p_{s,\delta}$ are of class $W^p_{s-m,\delta+m}$. Taking $s_1=s_2=s-1$, $s_3=s-2$, $\delta_1=\delta_2=\delta+1$ and $\delta_3=\delta+2$ in Lemma \ref{w_sobolev_properties}, we find that pointwise multiplication of two tensor fields of class $W^p_{s-1,\delta+1}$ is of class $W^p_{s-2,\delta+2}$. Similarly, with $s>2+3/p$ and $\delta>-3/p$, pointwise multiplication satisfies
\bea
W^p_{s-2,\delta+2}(\Sigma)\times W^p_{s,\delta}(\Sigma)&\to &W^p_{s-2,\delta+2}(\Sigma) \nonumber \\
W^p_{s-2,\delta+2}(\Sigma)\times W^p_{s-2,\delta+2}(\Sigma)&\to &W^p_{s-2,\delta+2}(\Sigma) \nonumber \\
W^p_{s-2,\delta+2}(\Sigma)\times W^p_{s,\delta}(1,\Sigma)&\to &W^p_{s-2,\delta+2}(\Sigma) \nonumber 
\eea
Since each term in \eqref{4dst_ham_ctt}-\eqref{4dst_mom_ctt} can be written as a product of terms in one of $W^p_{s,\delta}$, $W^p_{s-1,\delta+1}$ or $W^p_{s-2,\delta+2}$, it follows that the constraints map into ${\cal B}_z$.

To apply the implicit function theorem, take the point $x_0\in {\cal B}_x$ to be $x_0=(0,0)$ (corresponding to vanishing EFT couplings), and $y_0=(\psi_0,Y_0)$ to be the solution of the constraints of the $\epsilon_1=\epsilon_2=0$ theory with the given free data (which exists by assumption). The Fr{\'e}chet derivative at $(x_0,y_0)$ is obtained by linearizing \eqref{4dst_ham_ctt}-\eqref{4dst_mom_ctt} in the variables $(\psi,Y)$ at $\epsilon_1=\epsilon_2=0$. Its action on the element $(0,y)$ with $y\equiv (\delta\psi,\delta Y)$ is just the linear elliptic operator
\bea
{\cal D}_y {\cal F}(0;\delta\psi,\delta Y)=\left(\begin{array}{cc}
\tilde \gamma^{kl}\tilde D_k\tilde D_l -c&\frac14\psi_0^{-7}\tilde A_{kl}(\tilde L \cdot)^{kl}  \\
0& \tilde \Delta_L 
\end{array}\right)\left(\begin{array}{c}
\delta \psi  \\
\delta Y
\end{array}\right)
\eea
Lemma \ref{poisson_lemma} tells us that the Poisson operator $\tilde \gamma^{kl}\tilde D_k\tilde D_l -c$ is an isomorphism if for every non-trivial function $f\in W^p_{2,\rho}(\Sigma)$ (with $\rho>\frac12-\tfrac{3}{p}$) the inequality \eqref{pos_func_scalar} is satisfied which is true by assumption.

Moreover, $\tilde \Delta_L $ is also an isomorphism on asymptotically Euclidean manifolds due to Lemma \ref{conf_laplace_lemma}. Since ${\cal D}_y {\cal F}$ is upper triangular and the operators in the diagonals are isomorphisms, it follows that ${\cal D}_y {\cal F}$ is also an isomorphism. Therefore, the assumptions of the implicit function theorem are satisfied which concludes the proof.
\end{proof}

We conclude this subsection by a short discussion of how the inequality \eqref{pos_func_scalar} applies to the construction of initial data for a binary system of black holes with scalar hair. We first note that if $V(\phi)\geq 0$ then the minimally coupled scalar obeys the dominant energy condition. Furthermore, on a maximal slice $R[D]=K_{ij}K^{ij}+\varrho+{4\partial}{\rm ST~ corrections}\geq 0$ which implies that the metric $\gamma$ is in the positive Yamabe class. By virtue of the conformal invariance of the functional $I_\gamma[f]$ (defined in \eqref{yamabe_inv_def}), $\tilde \gamma$ is also in the positive Yamabe class. Astrophysically relevant stationary black hole solutions in a weakly coupled $4\partial$ST theory are expected to have only a small amount of scalar hair. The reason for this is that there are no known stable scalar hairy black hole solutions in the $\epsilon_1=\epsilon_2=0$ theory. Therefore, the functional in \eqref{pos_func_scalar} is expected to be positive for scalar hairy black hole data, as well as scalar hairy black hole binary data. 

\subsection{Construction of puncture data for scalar-tensor effective field theory}\label{sec:4dst_by}

The fact that the momentum constraint has such a simple form when written in terms of the conjugate momenta has the remarkable consequence that one can straightforwardly generalize the Bowen-York-type approach to a more general class of theories. The solution to the momentum constraint with $N$ punctures located at $c_{(\alpha)}$ is given by
\be\label{BY_horndeski}
\tilde \gamma_{ij}=\delta_{ij}, \qquad \qquad
\tilde \pi_\phi=0, \qquad\qquad \tilde \pi^{ij}=B^{ij}
\ee
where the traceless Bowen-York tensor $B^{ij}$ is given by
\bea
B^{ij}&=\sum\limits_{\alpha=1}^N&\biggr[\frac{3}{2 r_{(\alpha)}^2}\Bigl(P^i_{(\alpha)}\hat x^j_{(\alpha)}+P^j_{(\alpha)}\hat x^i_{(\alpha)}-\left(\delta^{ij}-\hat x^i_{(\alpha)}\hat x^j_{(\alpha)}\right)P_k\hat x^k_{(\alpha)}\Bigr) \nonumber \\
& &+\frac{3}{r^3_{(\alpha)}}\left(\epsilon^{i}{}_{kl}S^k_{(\alpha)} {\hat x}^l_{(\alpha)} {\hat x}^j_{(\alpha)}+\epsilon^j{}_{kl}S^k_{(\alpha)} {\hat x}^l_{(\alpha)} {\hat x}^i_{(\alpha)}\right)\biggr] \label{by_tensor}
\eea
with $r_{(\alpha)}\equiv |x-c_{(\alpha)}|$ and $\hat x_{(\alpha)}\equiv \left(x-c_{(\alpha)}\right)/r_{(\alpha)}$. Note that we have not specified the initial scalar field configuration yet, \eqref{BY_horndeski} is a solution regardless of what $\phi(x)$ is.

For the moment, assume that we are given a suitably regular (specified later) initial data for $\phi$. Furthermore, as in section \ref{sec:gr_by}, it is useful to write the conformal factor as a sum of a singular piece and a regular piece $u$:
\be\label{psi_decomp}
\psi=1+\frac{1}{\mu}+u, \qquad \qquad \frac{1}{\mu}\equiv\sum\limits_{\alpha=1}^N\frac{m_{(\alpha)}}{2r_{(\alpha)}}
\ee
Then, to obtain initial data for the variables $u$, $K_\phi$ and $K^i{}_j$ (or equivalently, for $\psi$, $K$, $\tilde A_{ij}$, $\tilde K_\phi$), one needs to solve the system of equations consisting of the elliptic PDE
\bea\label{by_ham_4dst}
0&= & \partial^i\partial_i u+\frac{1}{16} \psi  \partial_i\phi\partial^i\phi-\frac18\psi^5 \left(\delta^{i_1i_2}_{j_1j_2}K_{i_1}{}^{j_1}K_{i_2}{}^{j_2}- V(\phi)-2K_\phi^2\right) \nonumber \\
& &+\frac18\epsilon_1 f_1(\phi)\psi^5\left(12 K_\phi^4-2\psi^{-4} K_\phi^2~ \partial_i\phi\partial^i\phi-\frac14 \psi^{-8}(\partial_i\phi\partial^i\phi)^2\right) \nonumber \\
& & +\frac14 \epsilon_2\psi^{-3} \biggl( \partial_{i_3}\partial^{j_3} f_2(\phi)-2\partial_{i_3}\ln\psi~\partial^{j_3} f_2(\phi)-2\partial_{i_3} f_2(\phi)\partial^{j_3}\ln\psi \nonumber \\
& &+2\delta^{j_3}_{i_3}\partial^k\ln\psi ~\partial_k  f_2(\phi)-2\psi^4  f_2^\prime(\phi) K_\phi K_{i_3}{}^{j_3}\biggr) \nonumber \\
& & \times \biggl(8\psi^{-1}\partial^{i_3}\partial_{j_3}\psi-8\psi^{-1}\delta^{i_3}_{j_3}\partial_{k}\partial^{k}\psi-24\psi^{-2}\partial^{i_3}\psi~\partial_{j_3}\psi \nonumber \\ & &-16\psi^{-2}\delta^{i_3}_{j_3}\partial_{k}\psi~\partial^{k}\psi +2\psi^4 \delta^{i_1i_2i_3}_{j_1j_2j_3}K_{i_1}^{j_1}K_{i_2}^{j_2}\biggr)
\eea
and the set of algebraic equations (that come from the combination of \eqref{BY_horndeski}, \eqref{pi_phi} and \eqref{pi_ij})
\bea
0 &= &-2K_\phi(1+2\epsilon_1 f_1(\phi) X) -\frac43 \epsilon_2 f_2^\prime (\phi)\delta^{i_1i_2i_3}_{j_1j_2j_3}K_{i_1}{}^{j_1} K_{i_2}{}^{j_2}K_{i_3}{}^{j_3} \label{st_puncture_momentum_1} \\
\psi^{-6}B^i{}_j&=&\delta^{ii_1}_{jj_1}K_{i_1}{}^{j_1}+4 \epsilon_2\delta^{ii_1i_2}_{jj_1j_2}K_{i_1}{}^{j_1}\biggl[\psi^{-4}\partial_{i_2} \partial^{j_2}  f_2(\phi)- f_2^\prime(\phi)K_{\phi} K_{i_2}{}^{j_2} \nonumber \\
& & -2\psi^{-4} \partial_k  f_2(\phi)\left(\delta_{i_2}^k \partial^{j_2} \ln \psi+\delta^{j_2k}\partial_{i_2} \ln\psi-\delta_{i_2}^{j_2}\partial^k \ln\psi\right)\biggr] \label{st_puncture_momentum_2}
\eea
In the next subsection, it will be shown that this system of equations has a unique solution for small couplings (i.e. the couplings satisfy \eqref{small} with $\Lambda$ taken to be the smallest of the bare masses) such that $u\in C^2(\mathbb{R}^3)$ and $K_\phi,K^i{}_j\in C^1(\mathbb{R}^3)$.

Several remarks are in order about this construction. First of all, note the placing of the indices in equations \eqref{st_puncture_momentum_1}-\eqref{st_puncture_momentum_2}. The main reason for writing these equations in this particular form (apart from compactness) is that $K^i{}_j$ turns out to be regular on the entire initial slice (see next section for details), whereas $K_{ij}$ is singular at the punctures.

Another natural question to ask is how to solve the system in practice. The most straightforward approach is to start with the puncture data of vacuum GR $\psi=\psi_0$, $K_\phi=0$, $K_{ij}=\psi_0^{-2}B_{ij}$ and solve the system iteratively. At weak coupling, it seems likely that such an approach would converge quickly and the required number of iteration steps is fairly small.

Now let us address the question of how to choose initial data for the scalar field. Unfortunately, there appears to be no natural choice of the initial $\phi(x)$ that could be written down in a simple closed form. Of course, one could approximate the initial value of $\phi$ with a superposition of a set of scalar cloud configurations localised around the punctures, multiplied by some regularizing function that satisfies the regularity conditions stated in the next section. But such data is likely to contain significant junk scalar radiation in addition to the junk gravitational radiation already present in the GR version of the puncture data.

As a final remark, note that the linear and angular momentum of this data is still
\be
P_{\rm ADM}^i=\frac{1}{8\pi}\int\limits_{S_\infty}\mathrm{d}A~ \pi^{ij}\hat x_j=\frac{1}{8\pi}\int\limits_{S_\infty}\mathrm{d}A~ B^{ij}\hat x_j=\sum\limits_{\alpha}P^i_{(\alpha)}
\ee
and
\be
J^i=\frac{1}{8\pi}\int\limits_{S_\infty}\mathrm{d}A~ \pi_{jk}\epsilon^{ijl} x_l\hat x^k=\frac{1}{8\pi}\int\limits_{S_\infty}\mathrm{d}A~ B_{jk}\epsilon^{ijl} x_l\hat x^k=\sum\limits_{\alpha}S^i_{(\alpha)},
\ee
respectively, as expected.

\subsection{Existence and uniqueness of puncture data for scalar-tensor effective field theory}\label{sec:4dst_by_math}

Now we demonstrate that the construction just described yields a unique initial data for the variables $\psi$ and $K_I\equiv (K_\phi,K^i{}_j)$, provided we are given $W^p_{s,\delta}(\mathbb{R}^3)$ data for $\phi$ subject to some extra requirements at the punctures. For notational convenience, let $\Pi_I\equiv ( \pi_\phi, \pi^{i}{}_j-\psi^{-6}B^{i}{}_{j})$. We proceed very similarly as in section \ref{sec:4dst_ctt_math}: we define a map $z\equiv {\cal F}(x,y)$ with
\be
x\equiv \bigl(\epsilon_1 f_1(\phi),\epsilon_2 f_2(\phi)\bigr), \qquad y\equiv (u, K_I), \qquad z\equiv \bigl({\cal H}(x,y),\Pi_I(x,y)\bigr)
\ee
with
\be
{\cal B}_x\equiv W^p_{3,\delta}(\mathbb{R}^3)\times W^p_{3,\delta}(\mathbb{R}^3), \qquad {\cal B}_y=W^p_{3,\delta}(\mathbb{R}^3)\times W^p_{2,\delta+1}(\mathbb{R}^3)
\ee
To cancel the singularities of $\psi$ and make the Einstein-scalar-field and $4\partial$ST terms regular in \eqref{by_ham_4dst}, we have to assume $\lim\limits_{r_{(\alpha)}\to 0}r_{(\alpha)}^{-1/2}\partial_i\phi=0$ and $\lim\limits_{r_{(\alpha)}\to 0}r_{(\alpha)}^{-5}V(\phi)=0$. There is no such problem with the rest of the terms in \eqref{by_ham_4dst} since they are multiplied by negative powers of the conformal factor. Now let $s\geq 3$ and $p>3$ which implies $\phi\in C^2({\mathbb{R}^3})$. Assuming $f_1(\phi),f_2(\phi)\in W^p_{s,\delta}(\mathbb{R}^3)$ and $V(\phi)\in W^p_{s-2,\delta+2}(\mathbb{R}^3)$, the Sobolev multiplication properties (Lemma \ref{w_sobolev_properties}) imply
\be
{\cal B}_z\equiv W^p_{1,\delta+2}(\mathbb{R}^3)\times W^p_{2,\delta+1}(\mathbb{R}^3)
\ee
Note that in this case ${\cal F}(x,y)=0$ is an elliptic PDE for $u$ coupled to a set of algebraic equations for $K_I$. Take $x_0=0$ (i.e. $\epsilon_1=\epsilon_2=0$) and let $y_0$ denote the original Bowen-York solution $y_0\equiv (u_0,K_{I0})\in {\cal B}_y$ described in section \ref{sec:gr_by}. We wish to use the implicit function theorem to show that for small enough couplings, the equations \eqref{by_ham_4dst}-\eqref{st_puncture_momentum_1} admit a unique solution. The derivative of ${\cal F}$ at $(x_0,y_0)$ acting on the element $(0,y)$ with $y\equiv (\delta u,\delta K_I)$ is given by
\bea
{\cal D}_y {\cal F}\bigr|_{(x_0,y_0)}(0,y)\equiv \left(\begin{array}{c}
\delta {\cal H}  \\
\delta \pi_\phi \\
\delta \pi^i{}_{j}
\end{array}\right)=\left(\begin{array}{ccc}
\tilde \gamma^{kl}\tilde D_k\tilde D_l -\frac78 \psi_0^{-8}B_{kl}B^{kl}& 0 &\frac14 \psi_0^{-1} B^l{}_{k}  \\
0&2 &0  \\
0 & 0 & \delta^{il}_{jk}
\end{array}\right)\left(\begin{array}{c}
\delta u  \\
\delta  K_\phi \\
\delta K^k{}_l
\end{array}\right)
\eea
which is clearly invertible (due to Lemma \ref{poisson_lemma}) and thus the map $\cal F$ obeys the conditions of Theorem \eqref{thm:imp}. Therefore, we have the following result.

\begin{theorem}[Existence and uniqueness of puncture data]
Let $\phi\in W^p_{s,\delta}(\mathbb{R}^3)$ such that $s\geq 3$, $p>3$ and $0\leq \delta <1-\frac{3}{p}$. Then the system \eqref{by_ham_4dst}-\eqref{st_puncture_momentum_2} admits a unique solution in the neighbourhood of the original puncture solution of section \ref{sec:gr_by} for small enough $\epsilon_1$, $\epsilon_2$. The solution for the conformal factor $\psi$ is of the form \eqref{psi_decomp} with $u\in W^p_{3,\delta}(\mathbb{R}^3)$ and $K_\phi, K^i{}_j\in W^p_{2,\delta+1}(\mathbb{R}^3)$, provided that
\begin{itemize}
    \item[(i)] $\lim\limits_{r_{(\alpha)}\to 0}r_{(\alpha)}^{-1/2}~\partial_i\phi=0$ and $\lim\limits_{r_{(\alpha)}\to 0}r_{(\alpha)}^{-5}~V(\phi)=0$.
    \item[(ii)] $f_1(\phi)-f_1(\phi_\infty),f_2(\phi)-f_2(\phi_\infty)\in W^p_{3,\delta}(\mathbb{R}^3)$ and $V(\phi)\in W^p_{1,\delta+2}(\mathbb{R}^3)$
\end{itemize} 
Moreover, $u\in C^2(\mathbb{R}^3)$ and $K_\phi, K^i{}_j\in C^1(\mathbb{R}^3)$.
\end{theorem}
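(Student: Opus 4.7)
The plan is to apply the implicit function theorem (Theorem \ref{thm:imp}) in the exact setup already established in the paragraphs immediately preceding the theorem: the map $\mathcal{F}(x,y) = (\mathcal{H}, \Pi_I)$ with $x = (\epsilon_1 f_1(\phi), \epsilon_2 f_2(\phi))$, $y = (u, K_I)$, and the displayed weighted-Sobolev Banach spaces $\mathcal{B}_x, \mathcal{B}_y, \mathcal{B}_z$. The work splits into three sub-tasks: (a) verify that $\mathcal{F}$ actually lands in $\mathcal{B}_z$; (b) verify that $\mathcal{D}_y \mathcal{F}$ at $(x_0, y_0) = (0, (u_0, K_{I0}))$ is a Banach-space isomorphism $\mathcal{B}_y \to \mathcal{B}_z$; and (c) promote the resulting weighted-Sobolev solution to pointwise $C^k$ regularity via Lemma \ref{w_sobolev_properties}(i).

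For (a), the subtle point is the behavior at the punctures, where $\psi$ is singular like $r_{(\alpha)}^{-1}$. Sorting the terms in \eqref{by_ham_4dst} by the net power of $\psi$ they contain, one finds that every $4\partial$ST correction term carries an overall non-positive power of $\psi$, so its singular behavior is suppressed at the punctures and membership in $W^p_{1,\delta+2}(\mathbb{R}^3)$ follows immediately from Lemma \ref{w_sobolev_properties}(ii) together with the hypothesis $f_i(\phi)-f_i(\phi_\infty) \in W^p_{3,\delta}(\mathbb{R}^3)$. The two terms that do grow at the punctures are the kinetic-scalar contribution $\tfrac{1}{16}\psi\,\partial_i\phi\,\partial^i\phi$ and the potential contribution $\tfrac18 \psi^5 V(\phi)$; these are controlled precisely by the pointwise conditions $r_{(\alpha)}^{-1/2}\partial_i\phi\to 0$ and $r_{(\alpha)}^{-5}V(\phi)\to 0$, which together with the Sobolev multiplication rules place both terms in $W^p_{1,\delta+2}(\mathbb{R}^3)$. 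The algebraic equations \eqref{st_puncture_momentum_1}--\eqref{st_puncture_momentum_2} are polynomial in Sobolev-regular ingredients (including negative powers of $\psi$, which are regular where $\psi$ blows up) and land in $W^p_{2,\delta+1}(\mathbb{R}^3)$ by a direct application of the same multiplication rules.

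For (b), the derivative has already been computed as the block upper-triangular operator
\bea
\mathcal{D}_y \mathcal{F}\big|_{(x_0,y_0)} = \left(\begin{array}{ccc}
\Delta - \tfrac{7}{8}\psi_0^{-8}B_{kl}B^{kl} & 0 & \tfrac14 \psi_0^{-1}B^{l}{}_k \\
0 & 2 & 0 \\
0 & 0 & \delta^{il}_{jk}
\end{array}\right). \nonumber
\eea
The lower two diagonal entries are trivially invertible on the scalar and tensor factors of $\mathcal{B}_y$. The top-left entry is a Poisson operator with non-negative potential $c = \tfrac{7}{8}\psi_0^{-8}B_{kl}B^{kl}$ in the required Sobolev class; since $s=3 > 2 + 3/p$ for $p>3$, hypothesis~1 of Lemma \ref{poisson_lemma} applies and gives that this block is an isomorphism $W^p_{3,\delta}(\mathbb{R}^3) \to W^p_{1,\delta+2}(\mathbb{R}^3)$. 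A block upper-triangular operator with isomorphism diagonal blocks is itself an isomorphism, so $\mathcal{D}_y \mathcal{F}$ satisfies the hypothesis of Theorem \ref{thm:imp} and produces a unique $y = (u, K_I) \in \mathcal{B}_y$ for $(\epsilon_1, \epsilon_2)$ in a sufficiently small neighborhood of $(0,0)$. Step (c) is then immediate from Lemma \ref{w_sobolev_properties}(i): for $p>3$ one has $W^p_{3,\delta}(\mathbb{R}^3) \subset C^2(\mathbb{R}^3)$ and $W^p_{2,\delta+1}(\mathbb{R}^3) \subset C^1(\mathbb{R}^3)$.

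The main obstacle is the bookkeeping in (a): one must expand \eqref{by_ham_4dst} carefully, track which net power of $\psi$ multiplies each product of matter sources, and confirm that the imposed pointwise vanishing rates for $\partial_i\phi$ and $V(\phi)$ exactly kill the worst punctural singularities. Once this is done, the rest of the argument is essentially mechanical, closely mirroring the proof structure of Theorem \ref{thm:4dst_ctt}.
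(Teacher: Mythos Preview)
Your proposal is correct and follows essentially the same approach as the paper: the paper also sets up the map $\mathcal{F}$ in the Banach spaces $\mathcal{B}_x,\mathcal{B}_y,\mathcal{B}_z$, uses the puncture conditions (i) to ensure the Einstein--scalar-field terms land in $W^p_{1,\delta+2}$, computes the same block upper-triangular Fr\'echet derivative, and invokes Lemma~\ref{poisson_lemma} together with the implicit function theorem. Your treatment is in fact slightly more explicit than the paper's, spelling out the Sobolev embedding step (c) and the verification that $s=3>2+3/p$ for $p>3$.
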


\section{The conformal thin sandwich method}\label{Sec:CTS}

\subsection{The conformal thin sandwich equations}

An alternative but closely related way to construct asymptotically Euclidean initial data is the so-called conformal thin sandwich (CTS) method, originally proposed by York \cite{York:1998hy}. The mathematical formulation of the method only slightly differs from the CTT approach but the difference is important from a physics point of view: the CTS version provides a more natural way to prepare initial data that is nearly stationary. This is an improvement compared to the Bowen-York data in the sense that one can significantly reduce the amount of junk radiation plaguing the initial stages of a numerical simulation. However, there is a cost to pay for this: to achieve such data, one has to solve the momentum constraint, as well as the Hamiltonian constraint using an elliptic solver.

Once again, our starting point is the conformal metric \eqref{conf_metric} and the decomposition of the extrinsic curvature \eqref{extr_K_trace}. The main difference compared to the CTT approach is the further treatment of the variable $\tilde A_{ij}$. Let $\alpha$ denote the lapse function and $\beta^i$ the shift vector. Then introducing\footnote{In some numerical relativity applications the conformal metric is chosen such that $\det \tilde\gamma=1$ and in that case $U^{ij}=\partial_t \tilde \gamma^{ij}$.}
\be\label{def_U}
U^{ij}\equiv (\det\tilde\gamma)^{-1/3}\partial_t \left[\tilde \gamma^{ij}(\det\tilde\gamma)^{1/3}\right] \qquad \qquad {\rm and} \qquad \qquad \tilde \alpha \equiv \psi^{-6}\alpha,
\ee
allows us to write
\be\label{A_cts}
\tilde A^{ij}=\frac{1}{2\tilde \alpha} \left[(\tilde L \beta)^{ij}+U^{ij}\right]
\ee
where the conformal Killing operator $\tilde L$ was defined in \eqref{conf_KVF}. Thus in the CTS decomposition the variables $U^{ij}$ and $\beta^i$ are used instead of the CTT variables $\tilde A_{\rm TT}^{ij}$ and $Y^i$. Plugging \eqref{A_cts} into the Hamiltonian and momentum constraints \eqref{gr_ham}-\eqref{gr_mom} of General Relativity gives elliptic equations for the variables $(\psi,\beta^i)$, whereas the 'free data' now consists of the sextuple $(\tilde \alpha,\tilde \gamma_{ij}, U^{ij},K,\tilde \varrho, \tilde{\mathfrak{p}}^i)$. Therefore, the CTS constraints can be written as
\bea
 \tilde \gamma^{kl}\tilde D_k\tilde D_l \psi-\frac18 \psi  R[\tilde D]-\frac{1}{12}\psi^5 K^2+\frac18\psi^{-7}\tilde A_{kl}\tilde A^{kl}+\frac18 \psi^{-3}\tilde \varrho&=0 \\
  \tilde D_j \tilde A^{ij}-\frac23 \psi^6 \tilde \gamma^{ij}\tilde D_j K-\frac12 \tilde{\mathfrak{p}}^i&=0.
\eea
where $\tilde A$ is to be replaced by the RHS of \eqref{A_cts}. As mentioned, this system has very similar properties as the CTT system. In particular, given suitable free data, a unique solution exists to the corresponding elliptic boundary value problem under very similar conditions. Moreover, the Hamiltonian constraint decouples from the momentum constraint on CMC slices as in the case of the CTT system.

Consider now the scalar-tensor theory \eqref{4dST} at weak coupling. The CTS equations for this theory can be obtained in exactly the same way: i.e., by substituting the decomposition \eqref{A_cts} into \eqref{4dst_ham_ctt}-\eqref{4dst_mom_ctt}. For small enough $\epsilon_1$ and $\epsilon_2$, this is an elliptic PDE system for $(\psi,\beta^i)$. In the next section, we state a well-posedness theorem for the corresponding elliptic boundary value problem.

\subsection{Mathematical statements}

The similarity of the CTT and CTS equations allows us to straightforwardly transform statements about one of these systems to statements about the other one. The only extra requirement we need concerns the conformal lapse function $\tilde \alpha$. It is customary to choose the lapse so that its asymptotic value at spatial infinity approaches 1. Hence, the natural assumption on $\tilde \alpha$ (which is part of the free data) is $\tilde\alpha\in W^p_{s,\delta}(1,\Sigma)$ (defined in \eqref{def:sob_1}) and $\tilde \alpha>0$. Then the methods used to obtain Theorem \ref{thm_esf_constraints} imply the following.
\begin{theorem}\label{thm_esf_cts_constraints}
Let $(\Sigma,\gamma)$ be a $3$-dimensional asymptotically Euclidean manifold of class $W^p_{s,\delta}$ with $K=0$, $p>\tfrac32$, $s>2+\tfrac{3}{p}$ and $1-\tfrac{3}{p}>\delta>-\tfrac{3}{p}$. Moreover, suppose
$$R[\tilde D]-\frac12\tilde \gamma^{ij}\partial_i\phi\partial_j\phi>0.$$
Then there exists an open set of values for the free data $(\tilde \alpha,U,\phi,\tilde K_\phi)$ satisfying $\tilde\alpha\in W^p_{s,\delta}(1,\Sigma)$, $\tilde \alpha>0$, $\phi-\phi_\infty\in W^p_{s,\delta}(\Sigma)$, $U,\tilde K_\phi \in W^p_{s-1,\delta+1}(\Sigma)$ and $V(\phi)\in W^p_{s-2,\delta+2}(\Sigma)$ such that the conformally formulated constraints have a solution $(\psi,\beta^i)$ with $\psi\in W^p_{s,\delta}(1,\Sigma)$, $\psi>0$ and $\beta\in W^p_{s,\delta}(\Sigma)$.
\end{theorem}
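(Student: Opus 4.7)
The plan is to mimic the proof of Theorem \ref{thm_esf_constraints} (the CTT version), exploiting the fact that the CTS system differs from the CTT system only through the change of variables \eqref{A_cts}. Concretely, I would first substitute
$$\tilde A^{ij}=\frac{1}{2\tilde\alpha}\left[(\tilde L\beta)^{ij}+U^{ij}\right]$$
into the Lichnerowicz equation \eqref{Ham_GR_v2} (with the scalar source \eqref{scalar_rho}) and into the momentum constraint \eqref{gr_conf_mom}. This recasts the constraints as a system $\mathcal{F}(x,y)=0$ with the unknowns collected in $y\equiv(\psi-1,\beta)$ and the free data (including $\tilde\alpha-1$, $U$, $\phi-\phi_\infty$, $\tilde K_\phi$ and the quantities built from $\tilde\gamma$, $V$) collected in $x$. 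The target space is $\mathcal{B}_z\equiv W^p_{s-2,\delta+2}(\Sigma)\times W^p_{s-2,\delta+2}(\Sigma)$.

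The next step is a routine Sobolev multiplication check using Lemma \ref{w_sobolev_properties}. The crucial observation is that since $\tilde\alpha>0$ lies in $W^p_{s,\delta}(1,\Sigma)$, the function $1/\tilde\alpha$ also lies in $W^p_{s,\delta}(1,\Sigma)$ (by the multiplicative inverse property for $W^p_{s,\delta}(1,\Sigma)$ at weights $s>n/p$), and then $\tilde A^{ij}$ lies in $W^p_{s-1,\delta+1}(\Sigma)$ whenever $\beta\in W^p_{s,\delta}(\Sigma)$ and $U\in W^p_{s-1,\delta+1}(\Sigma)$. This reduces the Sobolev bookkeeping to exactly what was already performed for the CTT system, so $\mathcal{F}$ is a $C^1$ map into $\mathcal{B}_z$.

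To invoke Theorem \ref{thm:imp}, I would take a reference point $(x_0,y_0)$ at which a solution is explicitly known, for instance the trivial one with $\tilde\gamma=\delta$, $U=0$, $\phi=\phi_\infty$, $\tilde K_\phi=0$, $V(\phi_\infty)=0$, $\tilde\alpha=1$, giving $\psi=1$, $\beta=0$. Linearizing $\mathcal{F}$ in $(\psi,\beta)$ at this reference produces an upper-triangular block operator: the $(1,1)$ block is a Poisson operator $\tilde\gamma^{kl}\tilde D_k\tilde D_l - c$ of the type covered by Lemma \ref{poisson_lemma}, where the positivity hypothesis is guaranteed at the reference since $c$ reduces to a non-negative expression; the $(2,2)$ block is $\tfrac{1}{2\tilde\alpha}\tilde\Delta_L$. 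Because $\tilde\alpha$ is bounded above and below by positive constants, multiplication by $1/(2\tilde\alpha)$ preserves the isomorphism property established in Lemma \ref{conf_laplace_lemma}. Hence the Fr{\'e}chet derivative is an isomorphism $\mathcal{B}_y\to\mathcal{B}_z$, and the implicit function theorem produces a unique $C^1$ solution map on an open neighbourhood of $x_0$, giving the claimed open set of free data. The condition $R[\tilde D]-\tfrac12\tilde\gamma^{ij}\partial_i\phi\partial_j\phi>0$ is used to propagate the Poisson isomorphism beyond the trivial reference, via the same argument as in Theorem \ref{thm_esf_lichnerowicz}.

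The main obstacle, as in the CTT case, is verifying the isomorphism property for the $(1,1)$ block of the linearized system because the scalar-field terms in the Lichnerowicz equation have mixed conformal scalings and can contribute with both signs. This is precisely where the positivity hypothesis in the statement enters; once it is imposed, Lemma \ref{poisson_lemma} applies. I do not expect the lapse factor $1/\tilde\alpha$ in the momentum constraint to cause any essential new difficulty, because the principal symbol of the momentum equation is unchanged up to a strictly positive scalar multiplier, and the Fredholm/isomorphism theory of $\tilde\Delta_L$ on asymptotically Euclidean manifolds is insensitive to such a rescaling.
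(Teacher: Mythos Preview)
Your plan is essentially correct and matches the paper's own treatment, which is simply the one-line remark that ``the methods used to obtain Theorem~\ref{thm_esf_constraints} imply the following.'' Your identification of the only new ingredient---that $\tilde\alpha\in W^p_{s,\delta}(1,\Sigma)$ with $\tilde\alpha>0$ implies $1/\tilde\alpha\in W^p_{s,\delta}(1,\Sigma)$, so that $\tilde A^{ij}$ inherits the same Sobolev regularity as in the CTT case---is exactly the point.

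There is one small wrinkle in your reference point. The conformal metric $\tilde\gamma$ is \emph{fixed} by hypothesis (it is the given asymptotically Euclidean structure on $\Sigma$), so you are not free to set $\tilde\gamma=\delta$ as part of $(x_0,y_0)$; doing so would only prove the theorem for conformal metrics near flat. Since $K=0$, a cleaner and more direct route (and the one implicit in the paper's reference to the methods of Theorem~\ref{thm_esf_constraints}) is to exploit the decoupling: first solve the \emph{linear} momentum constraint for $\beta$, observing that $\beta\mapsto\tilde D_j\bigl((2\tilde\alpha)^{-1}(\tilde L\beta)^{ij}\bigr)$ is an isomorphism $W^p_{s,\delta}\to W^p_{s-2,\delta+2}$ by the same integration-by-parts argument as Lemma~\ref{conf_laplace_lemma} (the factor $1/\tilde\alpha>0$ does not spoil injectivity, since $\int(2\tilde\alpha)^{-1}|(\tilde L\beta)|^2=0$ still forces $\beta$ to be a conformal Killing field); then feed the resulting $\tilde A$ into the Lichnerowicz equation and invoke Theorem~\ref{thm_esf_lichnerowicz} directly. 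If you prefer to keep the implicit function theorem framing, take the reference to be $U_0=0$, $\tilde K_{\phi,0}=0$, $\tilde\alpha_0=1$, $\beta_0=0$, and $\psi_0$ the Lichnerowicz solution supplied by Theorem~\ref{thm_esf_lichnerowicz}, rather than the flat data.
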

Likewise, we have the CTS analogue of Theorem \ref{thm:4dst_ctt} for the $4$-derivative scalar tensor theory \eqref{4dST}. To avoid repetition, we only state this result without proof. (The proof is completely analogous to that of Theorem \ref{thm:4dst_ctt}: it is a combination of the weighted Sobolev multiplication lemma and the implicit function theorem.)
\begin{theorem}\label{thm:4dst_cts}
Let $(\Sigma,\tilde \gamma)$ be a $3$-dimensional asymptotically Euclidean manifold of type $W^p_{s,\delta}$ with
$$p>\frac{3}{2},\qquad s>3+\frac{3}{p} \qquad {\rm and} \qquad \frac{1}{2}-\frac{3}{p}<\delta<1-\frac{3}{p}.$$
Assume that we are given a solution of the CTS Einstein-scalar-field constraint equations $(\psi_0,\beta_0^i)$ with $\psi_0>0$ and $\psi_0-1,\beta_0^i \in W^p_{s,\delta}(\Sigma)$ corresponding to free data $(\tilde\alpha,\tilde \gamma,K,U,\phi,\tilde K_\phi)$ satisfying $\tilde\alpha\in W^p_{s,\delta}(1,\Sigma)$, $\tilde \alpha>0$, $\phi-\phi_\infty \in W^p_{s,\delta}(\Sigma)$, $V(\phi)\in W^p_{s-2,\delta+2}(\Sigma)$, $\tilde U, \tilde K_\phi \in W^p_{s-1,\delta+1}(\Sigma)$ and $K=0$.

Then the $4\partial$ST constraint system admits a unique solution $(\psi,\beta^i)$ for small enough $\epsilon_1,\epsilon_2$, provided that $f_1(\phi)-f_1(\phi_\infty),f_2(\phi)-f_2(\phi_\infty)\in W^p_{s,\delta}(\Sigma)$ for every non-trivial function $f\in W^p_{2,\rho}(\Sigma)$ with $\rho>\frac12-\tfrac{3}{p}$ the following inequality holds:
    \be\label{pos_func_scalar_2}
    \int\limits_\Sigma d^n x\sqrt{\tilde\gamma} \left[\tilde \gamma^{ij}\partial_i f\partial_j f+cf^2\right]>0
    \ee
    with
    \be
    c\equiv \frac18\left( R[\tilde D]+7\psi_0^{-8}\tilde A_{kl}\tilde A^{kl}+14 \psi_0^{-6}\tilde  K_\phi^2-\frac12 \tilde \gamma^{ij}\partial_i\phi\partial_j\phi-5\psi_0^4 V(\phi)\right) \nonumber
    \ee
The $4\partial$ST solution $(\psi,\beta^i)$ is close to $(\psi_0,\beta^i_0)$ in the sense of $W^p_{s,\delta}(\Sigma)$ norms.
\end{theorem}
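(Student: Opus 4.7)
The plan is to mirror the proof of Theorem \ref{thm:4dst_ctt}, applying the Banach space implicit function theorem (Theorem \ref{thm:imp}) to the nonlinear map $\mathcal{F}(x,y) = 0$ obtained by substituting the CTS decomposition \eqref{A_cts} into the conformally formulated $4\partial$ST constraints \eqref{4dst_ham_ctt}--\eqref{4dst_mom_ctt}. I would take
\[
x \equiv \bigl(\epsilon_1 f_1(\phi),\, \epsilon_2 f_2(\phi)\bigr), \qquad y \equiv (\psi,\beta^i),
\]
with the Banach spaces
\[
\mathcal{B}_x = W^p_{s,\delta}(\Sigma) \times W^p_{s,\delta}(\Sigma), \quad \mathcal{B}_y = W^p_{s,\delta}(1,\Sigma) \times W^p_{s,\delta}(\Sigma), \quad \mathcal{B}_z = W^p_{s-2,\delta+2}(\Sigma) \times W^p_{s-2,\delta+2}(\Sigma),
\]
and base point $(x_0,y_0) = \bigl((0,0),(\psi_0,\beta_0^i)\bigr)$, at which $\mathcal{F}$ vanishes by Theorem \ref{thm_esf_cts_constraints}.

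The first task is to verify that $\mathcal{F}$ is $C^1$ from $\mathcal{B}_x \times \mathcal{B}_y$ into $\mathcal{B}_z$. This is the same bookkeeping as in the CTT proof: every term produced by inserting \eqref{A_cts} into \eqref{4dst_ham_ctt}--\eqref{4dst_mom_ctt} is a product of factors each lying in one of $W^p_{s,\delta}(1,\Sigma)$, $W^p_{s,\delta}(\Sigma)$, $W^p_{s-1,\delta+1}(\Sigma)$ or $W^p_{s-2,\delta+2}(\Sigma)$, and the Sobolev multiplication rules of Lemma \ref{w_sobolev_properties} place their products in $W^p_{s-2,\delta+2}(\Sigma)$. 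The one new ingredient relative to the CTT case is the factor $\tilde\alpha^{-1}$; since $\tilde\alpha \in W^p_{s,\delta}(1,\Sigma)$ is strictly positive, its reciprocal also lies in $W^p_{s,\delta}(1,\Sigma)$, and this class is closed under pointwise products, so no new difficulty arises.

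The central step is to show that $\mathcal{D}_y\mathcal{F}|_{(x_0,y_0)}$ is an isomorphism $\mathcal{B}_y \to \mathcal{B}_z$. At $\epsilon_1=\epsilon_2=0$, with $K=0$ and the minimally coupled conformal momentum density $\tilde{\mathfrak{p}}^i = 2\tilde K_\phi \tilde\gamma^{ij}\partial_j\phi$ independent of $\psi$, the linearisation inherits the upper triangular form
\[
\mathcal{D}_y\mathcal{F}|_{(0,y_0)}(0;\delta\psi,\delta\beta) = \left(\begin{array}{cc} \tilde\gamma^{kl}\tilde D_k\tilde D_l - c & \mathcal{Q} \\ 0 & \mathcal{L}_{\tilde\alpha} \end{array}\right)\left(\begin{array}{c} \delta\psi \\ \delta\beta \end{array}\right),
\]
where $c$ is the potential appearing in \eqref{pos_func_scalar_2}, $\mathcal{Q}$ is a bounded first-order operator coming from the $\psi^{-7}\tilde A_{kl}\tilde A^{kl}$ term of the Lichnerowicz equation, and $\mathcal{L}_{\tilde\alpha}\delta\beta^i \equiv \tilde D_j\bigl[(2\tilde\alpha)^{-1}(\tilde L\,\delta\beta)^{ij}\bigr]$ is the conformal vector Laplacian weighted by the positive function $(2\tilde\alpha)^{-1}$. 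The diagonal Poisson operator is an isomorphism by Lemma \ref{poisson_lemma} under the positivity assumption \eqref{pos_func_scalar_2}. For the weighted operator $\mathcal{L}_{\tilde\alpha}$ I would adapt the proof of Lemma \ref{conf_laplace_lemma}: the operator is formally self-adjoint with respect to the measure $2\tilde\alpha\sqrt{\tilde\gamma}\,d^3x$, its Dirichlet form
\[
\tfrac{1}{2}\int_\Sigma (2\tilde\alpha)^{-1}\,(\tilde L\beta)_{ij}(\tilde L\beta)^{ij}\sqrt{\tilde\gamma}\,d^3x
\]
is manifestly non-negative and vanishes only on conformal Killing fields of $\tilde\gamma$, and no such field exists in $W^p_{s,\delta}(\Sigma)$ by the asymptotic-growth argument of Lemma \ref{conf_laplace_lemma}(2). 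Combined with the Fredholm theory on weighted Sobolev spaces this yields the desired isomorphism, and upper triangularity then makes $\mathcal{D}_y\mathcal{F}$ invertible.

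The main obstacle is not a deep analytic difficulty but rather the regularity accounting induced by the CTS substitution: the combination $\tilde A^{ij} = (2\tilde\alpha)^{-1}\bigl[(\tilde L\beta)^{ij} + U^{ij}\bigr]$ injects extra factors of $\tilde\alpha^{-1}$ and derivatives of $\beta$ into the already lengthy expressions \eqref{4dst_ham_ctt}--\eqref{4dst_mom_ctt}, and the appearance of $\tilde\alpha^{-1}$ under a divergence in the momentum constraint is what forces the mildly strengthened smoothness hypothesis $s > 3 + 3/p$ in the CTS statement. Once the mapping and isomorphism properties are established, Theorem \ref{thm:imp} supplies a neighbourhood of $(\epsilon_1,\epsilon_2) = (0,0)$ on which a unique solution $(\psi,\beta^i)$ exists with $(\psi-\psi_0,\,\beta^i-\beta_0^i)$ small in $W^p_{s,\delta}$, as claimed.
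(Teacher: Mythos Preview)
Your proposal is correct and follows the same approach the paper prescribes: the paper gives no detailed proof of this theorem, stating only that it ``is completely analogous to that of Theorem \ref{thm:4dst_ctt}: it is a combination of the weighted Sobolev multiplication lemma and the implicit function theorem.'' Your write-up in fact goes beyond the paper by spelling out the one genuinely new point in the CTS setting, namely that the linearised momentum constraint is governed by the weighted operator $\mathcal{L}_{\tilde\alpha}\,\delta\beta^i = \tilde D_j\bigl[(2\tilde\alpha)^{-1}(\tilde L\,\delta\beta)^{ij}\bigr]$ rather than the bare conformal vector Laplacian, and by giving the Dirichlet-form argument for its injectivity (a minor slip: the operator is self-adjoint with respect to the standard measure $\sqrt{\tilde\gamma}\,d^3x$, not $2\tilde\alpha\sqrt{\tilde\gamma}\,d^3x$, but your Dirichlet form and the conclusion are correct).
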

Similar results hold in the case when $K$ is not identically but nearly zero and of class $W^p_{s-1,\delta+1}$ (which can be proved by using the implicit function theorem).

\subsection{Choice of free data for black hole binaries}\label{sec:cts_sup}

We end the discussion of the original conformal thin sandwich method by giving a brief account of how free data can be chosen for numerical simulations in General Relativity and $4\partial$ST theories that approximates binary black hole systems. 

In General Relativity, a standard choice for the free data (for black hole binaries) is to take a "superposition" of two isolated Kerr solutions, with their respective coordinate origins shifted appropriately (see e.g. \cite{Lovelace_2009}). The idea is that as long as the separation of the two black holes is sufficiently large, the superposed geometry is a good first approximation of the actual geometry of the binary system. Then solving the constraints for $(\psi,\beta^i)$ accounts for the deviation of this naive data from the actual binary black hole geometry. In more detail, let $\gamma_{ij}^{(1)}$, $\alpha^{(1)}$, $\beta^{(1)i}$, $\dot \gamma_{ij}^{(1)}$ be the induced metric, the lapse function, the shift vector and the time derivative of the induced metric, respectively, of the first Kerr black hole, written in a suitable coordinate system (usually Kerr-Schild coordinates). We use a similar notation for the second black hole. One can take the naive conformal metric and conformal lapse to be
\bea
\tilde \gamma_{ij}&=&\gamma_{ij}^{(1)}+ \gamma_{ij}^{(2)}-\delta_{ij} \nonumber \\
\tilde\alpha &=&\alpha^{(1)}+\alpha^{(2)}-1 \nonumber
\eea
and introduce an auxiliary shift vector defined by
\be
\beta^i_{\rm aux}=\beta^{(1)i}+\beta^{(2)i} \nonumber
\ee
Inspired by the identities
\bea
U^{ij}&=&\psi^4\left(\partial_t \gamma^{ij}-\frac13 \gamma^{ij}\gamma_{kl}\partial_t \gamma^{kl}\right) \nonumber \\
K&=&\frac{1}{2\alpha}\left(\gamma_{ij}\partial_t\gamma^{ij}+2\partial_i\beta^i+\gamma^{ij}\beta^k\partial_k\gamma_{ij}\right) \nonumber
\eea
and taking $\psi_{aux}=1$ as the naive approximate solution for the conformal factor, we can choose the rest of the free data as
\bea
U^{ij}&=&-\tilde\gamma^{ik}\tilde\gamma^{jl}\left(U^{\rm aux}_{kl}-\frac13 \tilde \gamma_{kl} \tilde \gamma^{mn}U^{\rm aux}_{mn}\right) \qquad \qquad {\rm with} \qquad U^{\rm aux}_{ij}\equiv \dot \gamma_{ij}^{(1)}+\dot \gamma_{ij}^{(2)} \nonumber \\
K&=&\frac{1}{2\tilde \alpha}\left(\tilde \gamma_{ij}U^{ij}+2\partial_i\beta^i_{\rm aux}+\tilde \gamma^{ij}\beta^k_{\rm aux}\partial_k\tilde \gamma_{ij}\right) \nonumber
\eea

The construction just described works essentially the same way for the scalar-tensor theories \eqref{4dST}. The only difference is that for a $4\partial$ST theory the gravitational part of the free data is constructed using a black hole solution of the theory which for generic couplings differs from the Kerr solution. The natural choice for the scalar part of the free data is to follow the philosophy of "superposition" and take
\be
\phi=\phi^{(1)}+\phi^{(2)}-\phi_\infty, \qquad \qquad \dot\phi=\dot\phi^{(1)}+\dot\phi^{(2)}
\ee
where $\phi^{(1)}$, $\phi^{(2)}$ and $\dot\phi^{(1)}$, $\dot\phi^{(2)}$ are the scalar field configurations of the black holes and their time derivatives, $\phi_\infty$ is the asymptotic value of the scalar field at spatial infinity (which may be non-zero).

\section{The extended conformal thin sandwich method}\label{sec:xcts}

\subsection{Quasiequilibrium initial data for General Relativity}\label{sec:quasieq}

A proposal for initial data in GR representing a binary black hole system in quasiequilibrium is provided by an extension of the Conformal Thin Sandwich method \cite{Pfeiffer:2002iy}.

The purpose of this approach is to construct initial data for two black holes moving along circular orbits. In other words, the binary system is assumed to be in quasiequilibrium, meaning that the spacetime possesses an approximate helical Killing vector field $\xi_{\rm hel}$ that generates circular orbits. Given such a vector field, one could choose coordinates which co-rotate with the binary system. This amounts to choosing a time coordinate such that $\partial/\partial t$ is parallel with $\xi_{\rm hel}$. (Of course, with this choice $\partial/\partial t$ will not be timelike everywhere.) Then the requirement of quasiequilibrium can be expressed as $\partial_t g_{\mu\nu}\approx 0$. Using the conformal variables, one could set
\be\label{quasieq}
\partial_t \tilde \gamma_{ij}=0, \qquad \qquad \partial_t K=0
\ee
on the initial slice. The first of these two criteria implies
\be\label{quasieq_A}
\tilde A^{ij}=\frac{\psi^6}{2\alpha} (\tilde L \beta)^{ij}
\ee
Substituting this into the momentum constraint yields
\be\label{cts_shift}
\tilde \Delta_L\beta^i-(\tilde L \beta)^{ij}\tilde D_j\ln(\psi^{-6}\alpha)=\frac43\alpha \tilde D^i K
\ee
In the extended CTS approach one obtains an extra elliptic equation that involves the lapse function. To derive this equation, consider the following combination of the equations of motion
\be\label{xcts_lin_comb}
\frac14\alpha \psi^5\left(2\gamma^{\mu\nu}G_{\mu\nu}+{\cal H}\right)=0
\ee
The only second time derivative term in equation \eqref{xcts_lin_comb} is $\partial_t K$ so using \eqref{quasieq} gives the constraint equation
\be\label{cts_lapse}
\tilde \gamma^{ij}\tilde D_i\tilde D_j(\alpha\psi)=\alpha\psi\left(\frac78\psi^{-8}\tilde A_{ij}\tilde A^{ij}+\frac{5}{12}\psi^4 K^2+\frac18 R[\tilde D]\right)+\psi^5\beta^i\tilde D_i K
\ee
Therefore, the system of elliptic equations to be solved for $\psi$, $\alpha\psi$ and $\beta^i$ consists of \eqref{cts_shift}, \eqref{cts_lapse} and the Hamiltonian constraint \eqref{Ham_GR_v2} (with $\tilde A^{ij}$ set to be equal to the RHS of \eqref{quasieq_A}).

Having written down the elliptic equations, it remains to specify suitable boundary conditions. Following \cite{Cook:2004kt}, one can solve the system in a region with a single asymptotically flat end and a finite number of interior boundaries, corresponding to the apparent horizons of the black holes. The boundary conditions at spatial infinity are
\be\label{xcts_bc_1}
\lim\limits_{r\to \infty}\psi =1, \qquad \lim\limits_{r\to \infty}\alpha =1, \qquad \lim\limits_{r\to \infty}\beta^i =\Omega_{\rm orb}\epsilon^{ijk}e_jx_k 
\ee
where $\Omega_{\rm orb}$ is the orbital angular velocity of the system, as measured at infinity and $e^i$ is the unit vector specifying the axis of rotation. One may worry that with these boundary condition the shift vector diverges at spatial infinity. However, this is just an artefact of the corotating coordinates and the solution for $\beta^i$ can be written as
$$\beta^i =\Omega_{\rm orb}\epsilon^{ijk}e_jx_k+\beta^i_{\rm reg} $$
where $\beta^i_{\rm reg}$ is regular and has the usual asymptotic fall-off. To make sure that the numerical evolution of this data is stable, one can simultaneously work in two different coordinate system \cite{Scheel:2006gg}: one of them is the corotating coordinate system just described, the other one is "inertial".

To fix the boundary conditions on the interior boundaries, one can proceed as in \cite{Cook:2004kt} to require that the interior boundary surfaces correspond to apparent horizons. Let $S$ be one of these 2-surfaces and let $s^a$ be the unit normal (w.r.t. $\gamma$) to $S$. Then one can define the induced metric on $S$
\be
P_{ij}=\gamma_{ij}-s_is_j
\ee
It is useful to introduce the conformally rescaled version of $P_{ij}$ and $s^i$
\be
P_{ij}=\psi^4\tilde P_{ij}, \qquad \qquad s_i=\psi^2\tilde s_i
\ee
Furthermore, let $\tilde{\cal D}$ denote the covariant derivative associated with $\tilde P$. Then one can impose the following requirements and a set of corresponding boundary conditions \cite{Cook:2004kt}.
\begin{itemize}
    \item[(1)] First of all, one can construct a future-directed null vector field
    $$k^\mu=\frac{1}{\sqrt2}(n^\mu+s^\mu) $$
    and consider the null geodesic congruence with tangent vectors $k^\mu$ that pass through the surface $S$. These null geodesics determine a null hypersurface in the neighborhood of $S$. A possible notion of quasiequilibrium is to require that coordinate system initially tracks this null hypersurface, i.e. the coordinate location of the surface $S$ does not change initially.
    Then this condition translates to $\left(\frac{\partial}{\partial t}\right)^\mu k_\mu\bigr|_S=0$ which implies
    \be\label{xcts_b2}
    \left(\beta_\perp-\alpha\right) \bigr|_{S}=0, \qquad \qquad \beta^i_\perp\equiv \beta^is_i
    \ee
    \item[(2)] The condition that the expansion of the null geodesic congruence defined above must be zero yields a Neumann-type boundary condition on the conformal factor
    \be\label{xcts_b3}
    \left[\tilde s^k\tilde D_k\ln\psi-\frac14\left(\psi^2P^{ij}K_{ij}-\tilde P^{ij}\tilde D_i\tilde s_j\right)\right]\biggr|_{S}=0
    \ee
    \item[(3)] The vanishing of the shear of the null geodesic congruence together with $\partial_t \tilde \gamma_{ij}=0$ gives an equation for $\beta^i\equiv P^i{}_j\beta^j$
    \be\label{bc_ckv}
    \left(\tilde{\cal D}^{(i}\beta^{j)}_{\parallel}-\frac12 \tilde P^{ij}\tilde{\cal D}_k\beta^k_{\parallel}\right)\biggr|_{S}=0.
    \ee
    Interestingly, this is just the conformal Killing equation on $S$. Since any closed 2d surface is conformally equivalent to the unit 2-sphere, the problem of solving \eqref{bc_ckv} boils down to finding Killing vector fields on the unit 2-sphere. The unit 2-sphere has a family of rotational Killing vector fields $\xi^i$. Hence the vector
    \be\label{xcts_b4}
    \beta^i_\parallel=\Omega_{\rm s} \xi^i
    \ee
    solves equation \eqref{bc_ckv} for any constant parameter $\Omega_s$. The parameter $\Omega_s$ determines the magnitude of the angular velocity of the black hole whereas $\xi_i$ determines the axis of rotation.
    \item[(4)] The boundary value of the lapse function $\alpha$ can be chosen freely as part of the initial temporal gauge freedom.
\end{itemize}

To summarize, the construction of initial data in quasiequilibrium amounts to solving the coupled system of elliptic equations \eqref{Ham_GR_v2}, \eqref{cts_shift} and \eqref{cts_lapse} subject to the boundary conditions \eqref{xcts_bc_1}, \eqref{xcts_b2}, \eqref{xcts_b3}, \eqref{xcts_b4}. The initial values of $\tilde \gamma_{ij}$, $K$ are source terms in these equations and can be chosen freely, e.g. as $\tilde \gamma_{ij}=\delta_{ij}$ and $K=0$ or as a "superposition" of two black hole solutions, see section \ref{sec:cts_sup}.

\subsection{The Conformal Thin Sandwich equations for scalar-tensor effective field theories}\label{sec:xcts_4dst}

In this section we propose a way to adapt the extended CTS method for $4\partial$ST theories. For a scalar-tensor theory, we would like to derive elliptic equations for not only the variables $\psi$, $\alpha$ and $\beta^i$ but also for the initial value of $\phi$. Then the free part of the data consists of the initial values of $(\tilde \gamma, \partial_t \tilde \gamma, K, \partial_t K, K_\phi, \partial_t K_\phi)$.

The Hamiltonian and momentum constraints can be converted to elliptic equations in the exact same way as in GR: we can just take equations \eqref{4dst_ham_ctt}, \eqref{4dst_mom_ctt} and substitute $\tilde A^{ij}$ with $\frac{\psi^6}{2\alpha} (\tilde L \beta)^{ij}$ (c.f. equation \eqref{quasieq_A}) to impose quasiequilibrium. 

To obtain the analogue of \eqref{cts_lapse} and an elliptic equation for the scalar field, we start by considering the linear combination
$$\frac14\alpha \psi^5\left(2\gamma^{\mu\nu}E_{\mu\nu}+{\cal H}\right)=0 $$
of the $4\partial$ST theory, write it in terms of the variables $\alpha, \beta, \tilde\gamma, \psi, K, \tilde A$ and impose a suitable quasiequilibrium condition. Introducing
\bea
{\cal A}_{kl}&=&\frac{1}{\alpha}(\partial_t-\mathcal{L}_\beta)K_{kl}-K_{km}K^m{}_l-\frac{1}{\alpha}D_kD_l\alpha \\
{\cal A}_\phi&=&\frac{1}{\alpha}(\partial_t-\mathcal{L}_\beta)K_\phi+\frac{1}{2\alpha}D^k\alpha D_k\phi
\eea
and using the equations of Appendix \ref{app:adm}, this combination yields
\bea\label{4dst_xcts_lapse}
0&=&\tilde \gamma^{ij}\tilde D_i\tilde D_j(\alpha\psi)-\alpha\psi\left(\frac78\psi^{-8}\tilde A_{ij}\tilde A^{ij}+\frac{5}{12}\psi^4 K^2+\frac18 R[\tilde D]\right)+\psi^5(\partial_t-\beta^i\tilde D_i) K \nonumber \\
& & +\frac14 \alpha \psi^5\biggl\{ 16\gamma^{i[i_1}\gamma^{j]j_1}\left(D_{i_1}D_{j_1}\epsilon_2 f_2(\phi)-2\epsilon_2 f_2^\prime(\phi)K_\phi K_{i_1j_1}\right){\cal A}_{ij}\nonumber \\
& &+\frac14(1-3\epsilon_1 f_1(\phi) X)D^i\phi D_i\phi+\frac52 V(\phi)-K_\phi^2\left(7+9\epsilon_1 f_1(\phi) X\right) \nonumber \\
& &-\epsilon_2\gamma^{i_1i_2i_3}_{j_1j_2j_3}\left(R[D]_{i_1i_2}{}^{j_1j_2}+2K_{i_1}{}^{j_1}K_{i_2}{}^{j_2}\right)\left(D_{i_3}D^{j_3}f_2(\phi)-2 f_2^\prime(\phi)K_\phi K_{i_3}{}^{j_3}\right) \nonumber \\
& &+4\epsilon_2\gamma^{i_1i_2}_{j_1j_2}\biggl[\left(R[D]_{i_1i_2}{}^{j_1j_2}+2 K_{i_1}{}^{j_1}K_{i_2}{}^{j_2}\right)\left( f_2^\prime(\phi){\cal A}_\phi-2  f_2^{\prime\prime}(\phi)K_\phi^2\right) \nonumber \\
& & -2\epsilon_2\left(\gamma_{i_1}^{j_3}\gamma_{i_3}^{j_1}+\gamma^{j_1j_3}\gamma_{i_1i_3}\right)\left(-K_{j_3}{}^k D_k f_2(\phi)+2D_{j_3}( f_2^\prime(\phi)K_\phi )\right)D^{i_3}K_{i_2}{}^{j_2}\biggr]\biggr\}
\eea
where the terms in the second to sixth lines are to be converted to the conformal variables using the equations of Appendix \ref{app:conf}. Equation \eqref{4dst_xcts_lapse} contains second time derivatives via the terms ${\cal A}_\phi$ and ${\cal A}_{ij}$. The natural way to impose quasiequilibrium on the data is to choose
\be
\partial_t \tilde \gamma_{ij}=0, \qquad \qquad K_\phi=0
\ee
and set the combination of time derivatives of $K_{ij}$ and $K_\phi$ appearing in \eqref{4dst_xcts_lapse} to zero. Using these conditions in \eqref{4dst_xcts_lapse} and the scalar equation of motion yields an elliptic equation for the initial values of $\alpha\psi$ and $\phi$:
\bea
0&=&\tilde \gamma^{ij}\tilde D_i\tilde D_j(\alpha\psi)-\alpha\psi\left(\frac78\psi^{-8}\tilde A_{ij}\tilde A^{ij}+\frac{5}{12}\psi^4 K^2+\frac18 R[\tilde D]\right)-\psi^5\beta^i\tilde D_i K \nonumber \\
& & +\frac14 \alpha \psi^5\biggl\{ -16\epsilon_2\gamma^{i[i_1}\gamma^{j]j_1}D_{i_1}D_{j_1} f_2(\phi)\left(K_{ik}K^k{}_j+\frac{1}{\alpha}D_iD_j\alpha\right)\nonumber \\
& & \qquad\qquad +\frac14(1-3\epsilon_1 f_1(\phi) X)D^i\phi D_i\phi+\frac52 V(\phi) \nonumber \\
& & \qquad\qquad -\epsilon_2\gamma^{i_1i_2i_3}_{j_1j_2j_3}\left(R[D]_{i_1i_2}{}^{j_1j_2}+2K_{i_1}{}^{j_1}K_{i_2}{}^{j_2}\right)D_{i_3}D^{j_3} f_2(\phi) \nonumber \\
& & \qquad\qquad +4\epsilon_2\gamma^{i_1i_2}_{j_1j_2}\biggl[ f_2^\prime(\phi)\frac{1}{2\alpha}D^k\alpha D_k\phi\left(R[D]_{i_1i_2}{}^{j_1j_2}+2 K_{i_1}{}^{j_1}K_{i_2}{}^{j_2}\right) \nonumber \\
& & \qquad\qquad +2\left(\gamma_{i_1}^{j_3}\gamma_{i_3}^{j_1}+\gamma^{j_1j_3}\gamma_{i_1i_3}\right)K_{j_3}{}^k D_k f_2(\phi)D^{i_3}K_{i_2}{}^{j_2}\biggr]\biggr\} \label{4dst_alpha_xcts}\\
0&=&(1+6\epsilon_1 f_1(\phi)X)\left(\tilde D^k \tilde D_k\phi+2\tilde D^k \phi \tilde D_k\ln\psi\right)-\psi^4 V'(\phi)+\left(1-\epsilon_1 f_1(\phi)(D\phi)^2\right)\tilde D^i\phi \tilde D_i\ln\alpha  \nonumber \\ 
& &+\psi^4\biggl\{ -\frac34 \epsilon_1 f_1^\prime(\phi)(D^k\phi D_k\phi)^2+2\epsilon_1 f_1(\phi)\gamma^{i_1i_2}_{j_1j_2}D_{i_1}\phi D^{j_1}\phi D_{i_2}D^{j_2}\phi+8\epsilon_2 f_2^\prime(\phi)\gamma^{i_1i_2i_3}_{j_1j_2j_3}D^{j_1}K_{i_2}{}^{j_2}D_{i_1} K_{i_3}{}^{j_3} \nonumber \\
& & -2\epsilon_2 f_2^\prime(\phi)\gamma^{ii_1i_2}_{jj_1j_2}\left(R[D]_{i_1i_2}{}^{j_1j_2}+2K_{i_1}{}^{j_1}K_{i_2}{}^{j_2}\right)\left(\frac{1}{\alpha}D_iD^j\alpha+K_{ik}K^{jk}\right)\biggr\} \label{4dst_phi_xcts}
\eea
Of course, again, these equations need to be rewritten in terms of the conformal variables. Since this procedure is straightforward (using the identities of Appendix \ref{app:conf}) but not particularly enlightening, we spare the reader from detailing these equations in their full lengths.

The final step in the construction is to impose boundary conditions. The apparent horizon and asymptotic boundary conditions on the gravitational variables can be taken to be the same as in section \ref{sec:quasieq} (equations \eqref{xcts_bc_1}, \eqref{xcts_b2}, \eqref{xcts_b3}, \eqref{xcts_b4}). Regarding the scalar field, it is simplest to choose
\be\label{bc_phi_xcts}
\lim\limits_{r\to \infty}\phi=\phi_\infty \qquad \qquad {\rm and } \qquad \qquad s^i\partial_i\phi\bigr|_S=0
\ee
where $s^i$ is the unit normal to the interior boundary $S$ and $\phi_\infty$ is a constant. For a single stationary black hole, the latter condition captures the property that there is no scalar flux through the horizon.

To conclude our proposal, the extended conformal thin sandwich method for the scalar-tensor theory \eqref{4dST} amounts to solving the elliptic equations \eqref{4dst_ham_ctt}, \eqref{4dst_mom_ctt}, \eqref{4dst_alpha_xcts} and \eqref{4dst_phi_xcts} for the variables $(\psi,\beta^i, \alpha, \phi)$ subject to the boundary conditions \eqref{xcts_bc_1}, \eqref{xcts_b2}, \eqref{xcts_b3}, \eqref{xcts_b4} and \eqref{bc_phi_xcts}. The (remaining) free part of the data $(\tilde \gamma, K)$ can be fixed using the methods of section \ref{sec:cts_sup}.

\subsection{On the existence and uniqueness of the extended Conformal Thin Sandwich system}\label{sec:xcts_discussion}

The problem of existence and uniqueness of the extended conformal thin sandwich system is more complicated than in the case of the CTT or the original CTS system, even in General Relativity. In particular, the issue is with the extra equation \eqref{cts_lapse}. Concentrate on the terms in \eqref{cts_lapse}
$$\tilde\gamma^{ij}\tilde D_i\tilde D_j(\alpha\psi)-\alpha\psi\left(\frac78\psi^{-8}\tilde A_{ij}\tilde A^{ij}\right)=\tilde\gamma^{ij}\tilde D_i\tilde D_j(\alpha\psi)-\frac{7}{32}\psi^6(\tilde L \beta)^{ij}(\tilde L \beta)_{ij}(\alpha\psi)^{-1}$$
Upon linearization of \eqref{cts_lapse}, this part of the equation gives rise to the terms
$$\tilde\gamma^{ij}\tilde D_i\tilde D_j\delta (\alpha\psi)+\frac{7}{32}(\alpha\psi)^{-2}\psi^6(\tilde L \beta)^{ij}(\tilde L \beta)_{ij}\delta(\alpha\psi)+...=0 $$
However, for a linear elliptic equation of the form
$$\tilde\gamma^{ij}\tilde D_i\tilde D_j\Phi-c\Phi=0 $$
with a function $c$, the uniqueness of the solution may fail if $c$ is negative\footnote{The solution is unique on asymptotically Euclidean manifolds when $c\geq 0$, c.f. Lemma \ref{poisson_lemma}.}. Therefore, one could find solutions so that $(\tilde L \beta)^{ij}(\tilde L \beta)_{ij}$ is large enough and the linearized version of \eqref{cts_lapse} does not have a unique solution.

Going beyond this toy argument, it has been explicitly demonstrated  \cite{Walsh:2006au,Pfeiffer:2005jf,Baumgarte:2006ug,Holst:2012vm} (both numerically and by means of bifurcation theory) that there exists choices of free data for the extended CTS system such that the corresponding solution is not unique. Nevertheless, the extended CTS approach has been used in several numerical simulations to construct binary black hole initial data (see e.g. \cite{num_rel} and the references therein).

Similar issues may be expected in the scalar-tensor version of the extended CTS method. In addition to this, the scalar elliptic equation \eqref{4dst_phi_xcts} may also suffer from this type of failure of uniqeness under some choice of the potential $V(\phi)$ and the couplings $f_1(\phi)$, $f_2(\phi)$. In fact, there is already some numerical evidence that this may happen. Several recent studies focused on stationary, axisymmetric black hole solutions of the theory $f_1(\phi)=V(\phi)=0$, $f_2(\phi)=\eta \phi^2$ where $\eta$ is a constant (see e.g. \cite{Herdeiro:2020wei,Doneva:2017bvd,Cunha:2019dwb,Dima:2020yac,Berti:2020kgk}). These studies concluded that if the coupling constant $\eta$ is large enough (i.e. in the strongly coupled regime where $\eta$ has order of magnitude $M^2$ where $M$ is the mass of the black hole) then the Kerr spacetime is no longer the unique rotating black hole solution: for fixed mass and angular momentum, there exists another solution with a nontrivial scalar configuration. Moreover, the scalarized solution appears to be stable. More generally, $\phi=0$ is a solution to $E_\phi=0$ and the theory inherits solutions of vacuum General Relativity. Consider the simple case when the initial slice has a single interior boundary and the free data $(\tilde\gamma,K)$ is chosen to be the induced metric and the trace of the extrinsic curvature of a slice of the Kerr spacetime (e.g. a Kerr-Schild slice). Clearly, for this particular theory, $\phi=0$ is a solution to \eqref{4dst_phi_xcts} (provided that $\phi_\infty=0$). Then the extended CTS equations are the same as in vacuum GR and they reproduce a slice of the Kerr solution \cite{Cook:2000vr}. However, for large enough $\eta$, \eqref{4dst_phi_xcts} is likely to have a bifurcating branch of solutions: for a choice of free data that represents a slice of a Kerr black hole, there will be a solution with a non-trivial scalar field configuration in addition to the solution with $\phi\equiv 0$. There is no reason to expect that the solution with non-trivial $\phi$ is an exact slice of the stationary scalar hairy black hole solution found in \cite{Herdeiro:2020wei,Doneva:2017bvd,Cunha:2019dwb,Dima:2020yac,Berti:2020kgk}. Instead, this data is more likely to represent a slice of a dynamical black hole that may eventually settle down to the stationary scalar hairy black hole of \cite{Herdeiro:2020wei,Doneva:2017bvd,Cunha:2019dwb,Dima:2020yac,Berti:2020kgk} when evolved in time. To obtain data that is a more accurate representation of a slice of a scalarized stationary black hole, one could choose the free data $(\tilde\gamma,K)$ to be the induced metric and the trace of the extrinsic curvature of a constant time slice of the scalarized black hole solution. The construction of initial data for a binary system of such black holes could then be done by using the idea of "superposition" (section \ref{sec:cts_sup}) to choose the free data.

\section{Discussion and Summary}\label{sec:discussion}

In this paper, we considered the initial data problem on asymptotically Euclidean hypersurfaces in a class of (weakly coupled) scalar-tensor effective theories, using standard methods known from the elliptic PDE literature. These methods seem to be quite robust and are applicable in more general settings. For example, we did not make use of the specific form of the equations of motion so it seems likely that the construction of asymptotically Euclidean initial data works similarly for any weakly coupled Horndeski or even Lovelock theories. An interesting application for Lovelock theories may be to construct initial data for a binary black hole system and investigate how the weakly coupled Lovelock terms affect the black string instability \cite{Gregory:1993vy}. Another straightforward application could be to extend the methods to the case of systems that involve other compact objects, such as binaries involving scalarized neutron stars in $4\partial$ST theories \cite{Silva:2017uqg}.

In section \ref{sec:xcts_discussion}, we mentioned that the $4\partial$ST version of the extended conformal thin sandwich system suffers from a likely failure of uniqueness of solutions associated with the elliptic equation obtained for the initial value of the scalar field. It would be interesting to apply the machinery of bifurcation theory to this system to better understand the nature of this non-uniqueness.

Nevertheless, the results presented in this paper for the CTT and CTS systems have the obvious possible application to construct black hole binary initial data numerically that could be evolved using the modified harmonic formulation of \cite{Kovacs:2020ywu}. The puncture data (which is based on the CTT approach) has the advantage that it only requires solving a single elliptic equation (the Hamiltonian constraint) coupled to a set of algebraic equations. The main setback of this method (besides the presence of junk radiation in the data) that there appears to be no simple and natural candidate for the scalar field initial data. In the conformal thin sandwich method, one needs to solve a coupled system of elliptic equations for the conformal factor and the shift vector which may be computationally more costly than the puncture method. However, if the $4$-derivative EFT couplings are sufficiently weak, these equations may be solved iteratively, starting from an initial data in General Relativity. Furthermore, the free part of the data may be fixed using standard methods of General Relativity, e.g. by "superposing" stationary scalar hairy black hole solutions. 

\subsection*{Acknowledgments} 

I would like to thank Harvey Reall for helpful conversations, encouragement throughout this work and for his comments on a draft. I am also grateful to Justin Ripley for useful feedback on the manuscript. The author acknowledges financial support from the George and Lilian Schiff Studentship.

\begin{appendices}

\section{Evolution equations of scalar-tensor effective theories in a $3+1$ form}\label{app:adm}

In this appendix we provide the evolution equations of the $4\partial$ST theories written in terms of the ADM variables. These equations can be found in e.g. \cite{Berti2020} for the theories with $f_1(\phi)\equiv 0$. The main reason for giving these equations is that our convention for the scalar field is slightly different from that of \cite{Berti2020}.

Following \cite{Berti2020}, it is useful to introduce
\bea
{\cal A}_{kl}&=&\frac{1}{\alpha}(\partial_t-\mathcal{L}_\beta)K_{kl}-K_{km}K^m{}_l-\frac{1}{\alpha}D_kD_l\alpha \\
{\cal A}_\phi&=&\frac{1}{\alpha}(\partial_t-\mathcal{L}_\beta)K_\phi+\frac{1}{2\alpha}D^k\alpha D_k\phi
\eea
The evolution equations of $4\partial$ST theories are the spatial components of the gravitational equation of motion and the scalar equation of motion. These can be written as
\bea
E_{ij}&\equiv & \frac{\partial(\gamma^{-1/2}\pi_{ij})}{\partial K_{kl}}{\cal A}_{kl}+\frac{\partial(\gamma^{-1/2}\pi_{ij})}{\partial K_\phi}{\cal A}_{\phi}-{\cal F}_{ij}=0 \\
E_\phi & \equiv &\frac{\partial(\gamma^{-1/2}\pi_{\phi})}{\partial K_{kl}}{\cal A}_{kl}+\frac{\partial(\gamma^{-1/2}\pi_{\phi})}{\partial K_{\phi}}{\cal A}_{\phi}-{\cal F}_\phi=0
\eea
with
\bea
{\cal F}^i{}_{j}&=&-\frac12(1+2\epsilon_1 f_1(\phi) X)D^i\phi D_j\phi+\frac12 V(\phi)\gamma^i_j+\frac14\gamma^i_j\left(D^k\phi D_k\phi-4K_\phi^2\right)\left(1+\epsilon_1 f_1(\phi) X\right) \nonumber \\
& &-\frac14\gamma^{ii_1i_2}_{jj_1j_2}\biggl[\left(R[D]_{i_1i_2}{}^{j_1j_2}+2 K_{i_1}{}^{j_1}K_{i_2}{}^{j_2}\right)\left(1+16 \epsilon_2 f_2^{\prime\prime}(\phi)K_\phi^2\right) \nonumber \\
& & +16\epsilon_2\left(\gamma_{i_1}^{j_3}\gamma_{i_3}^{j_1}+\gamma^{j_1j_3}\gamma_{i_1i_3}\right)\left(-K_{j_3}{}^k D_k f_2(\phi)+2D_{j_3}( f_2^\prime(\phi)K_\phi )\right)D^{i_3}K_{i_2}{}^{j_2}\biggr] \\
{\cal F}_{\phi}&=&-V^\prime(\phi)+(1+6\epsilon_1 f_1(\phi)X-8\epsilon_1 f_1(\phi)K_\phi^2)\left(D^kD_k\phi-2KK_\phi\right)-3\epsilon_1 f_1^\prime(\phi)X^2 \nonumber \\
& &-8\epsilon_1 f_1(\phi)K_\phi\left(K_i{}^jD_j\phi D^i\phi-2D_iK_\phi D^i\phi\right)+2\epsilon_1 f_1(\phi)\gamma^{i_1i_2}_{j_1j_2}D_{i_1}\phi D^{j_1}\phi\left(D_{i_2}D^{j_2}\phi-2K_\phi K_{i_2}{}^{j_2}\right) \nonumber \\ 
& &+8\epsilon_2 f_2^\prime(\phi)\gamma^{i_1i_2i_3}_{j_1j_2j_3}D^{j_1}K_{i_2}{}^{j_2}D_{i_1} K_{i_3}{}^{j_3}
\eea
and the coefficients of the time derivatives are
\bea
\frac{\partial(\gamma^{-1/2}\pi^i{}_{j})}{\partial K^k{}_{l}}&=&\gamma^{il}_{jk}-4\epsilon_2\gamma^{ili_1}_{jkj_1}\left(D_{i_1}D^{j_1} f_2(\phi)-2 f_2^\prime(\phi)K_\phi K_{i_1}{}^{j_1}\right) \\
\frac{\partial(\gamma^{-1/2}\pi^i{}_{j})}{\partial K_\phi}=\frac{\partial(\gamma^{-1/2}\pi_\phi)}{\partial K_i{}^j}&=&-2\epsilon_2 f_2^\prime(\phi)\gamma^{ii_1i_2}_{jj_1j_2}\left(R[D]_{i_1i_2}{}^{j_1j_2}+2K_{i_1}{}^{j_1}K_{i_2}{}^{j_2}\right) \\
\frac{\partial(\gamma^{-1/2}\pi_\phi)}{\partial K_\phi}&=&-2\left(1+6\epsilon_1 f_1(\phi)X+2 \epsilon_1 f_1(\phi)(D\phi)^2\right)
\eea

\section{Conformal decomposition}\label{app:conf}

In this section, we collected some identities for converting the elliptic equations to their conformal form. First of all, we have
\be
X\equiv -\frac12 (\nabla\phi)^2=2\psi^{-12}{\tilde K_\phi}^2-\frac12\psi^{-4}\tilde \gamma^{ij}\tilde D_i\phi \tilde D_j\phi
\ee
For a general scalar function $\Phi$, we have
\bea
D_iD^j\Phi&=\psi^{-4}\tilde D_i \tilde D^j \Phi-2\psi^{-4} \tilde D_k \Phi\left(\delta_i^k \tilde D^j \ln \psi+\tilde \gamma^{jk}\tilde D_i \ln\psi-\delta_i^j\tilde \gamma^{kl}\tilde D_l \ln\psi\right)
\eea
and in particular,
\be
D_iD^i\Phi=\psi^{-4}\left(\tilde \gamma^{ij}\tilde D_i \tilde D_j \Phi+2\tilde \gamma^{ij}\tilde D_i\ln\psi \tilde D_j \Phi\right)
\ee
The conformal transformation rule for the Riemann tensor is
\bea
R[D]_{i_1i_2}{}^{j_1j_2}&=&\psi^{-4}R[\tilde D]_{i_1i_2}{}^{j_1j_2}-8\psi^{-5}\delta_{[i_1}^{[j_1}\tilde D_{i_2]}\tilde D^{j_2]}\psi \nonumber \\
& &+24\psi^{-6}\delta_{[i_1}^{[j_1}\tilde D_{i_2]}\psi\tilde D^{j_2]}\psi-4\delta_{i_1i_2}^{j_1j_2}\psi^{-6}\tilde\gamma^{kl}\tilde D_k\psi\tilde D_l\psi
\eea
To convert products of the extrinsic curvature to conformal variables, the following may be useful:
\bea
\delta^{i_1i_2i_3}_{j_1j_2j_3}K_{i_1}{}^{j_1}K_{i_2}{}^{j_2}&=&\frac29 K^2\delta^{i_3}_{j_3}-\frac23 \psi^{-6} K \tilde A^{i_3}_{j_3}+2\psi^{-12}\tilde A^{i_3}{}_k\tilde A^k{}_{j_3}-\psi^{-12}\tilde A^{k}{}_l\tilde A^l{}_{k} \delta^{i_3}_{j_3}
\eea
Finally, for the derivatives of the extrinsic curvature, one can use
\bea
D^jK_{i_1}{}^{j_1}&=&\psi^{-10}\tilde D^j \tilde A_{i_1}{}^{j_1}+\frac13 \psi^{-4}\tilde \gamma_{i_1}{}^{j_1} \tilde D^j K-6\psi^{-10}\tilde A_{i_1}{}^{j_1}\tilde D^j\ln\psi \nonumber \\
& &+2\psi^{-10}\left(\tilde\gamma^{jj_1}\tilde A_{i_1}{}^l\tilde D_l\ln\psi+\tilde \gamma_{i_1}{}^j\tilde A^{j_1l}\tilde D_l\ln\psi-\tilde A_{i_1}{}^j\tilde D^{j_1}\ln\psi-\tilde A^{jj_1}\tilde D_{i_1}\ln\psi\right)
\eea

\end{appendices}

\bibliographystyle{JHEP}
\bibliography{refs}
\end{document}